\newcommand\numberthis{\addtocounter{equation}{1}\tag{\theequation}}
\newtheorem{theorem}{Theorem}
\newtheorem{claim}[theorem]{Claim}
\newtheorem{lemma}[theorem]{Lemma}
\newtheorem{corollary}[theorem]{Corollary}
\newtheorem{observation}[theorem]{Observation}
\newtheorem{example}[theorem]{Example}
\newtheorem{definition}[theorem]{Definition}
\newtheorem{assumption}{Assumption}
\renewcommand{\cite}{\citep}
\newcommand{\demand}{\bbx^{\mathcal{D}}}
\newcommand{\demandnb}{x^{\mathcal{D}}}
\newcommand{\RR}{\mathbb{R}}
\newcommand{\bba}{\mathbf{a}}
\newcommand{\bbg}{\mathbf{g}}
\newcommand{\bbp}{\mathbf{p}}
\newcommand{\bbq}{\mathbf{q}}
\newcommand{\bbx}{\mathbf{x}}
\newcommand{\bbb}{\mathbf{b}}
\newcommand{\bbv}{\mathbf{v}}
\newcommand{\calO}{\mathcal{O}}
\DeclareMathOperator*{\argmax}{arg\,max}
\title{Proportional Response Dynamics in Gross Substitutes Markets}
\author[1]{Yun Kuen Cheung}
\author[2]{Richard Cole} 
\author[3]{Yixin Tao}
\affil[1]{The Australian National University, Australia}
\affil[2]{New York University, USA}
\affil[3]{ITCS, Key Laboratory of Interdisciplinary Research of Computation and Economics\\ Shanghai University of Finance and Economics, China }
\begin{document}
\maketitle
\begin{abstract}
Proportional response is a well-established distributed algorithm which has been shown to converge to competitive equilibria in both Fisher and Arrow-Debreu markets, for various sub-families of homogeneous utilities, including linear and constant elasticity of substitution utilities. We propose a natural generalization of proportional response for gross substitutes utilities, and prove that it converges to competitive equilibria in Fisher markets. This is the first convergence result of a proportional response style dynamics in Fisher markets for utilities beyond the homogeneous utilities covered by the Eisenberg-Gale convex program.
We show an empirical convergence rate of $\calO(1/T)$ for the prices. Furthermore, we show that the allocations of  a lazy version of the generalized proportional response dynamics converge to competitive equilibria in Arrow-Debreu markets.
\end{abstract}

\section{Introduction}

Competitive (aka market) equilibrium is a fundamental concept in the study of markets, describing stable outcomes that can emerge from agents' trading. Originally proposed by L\'{e}on Walras more than 150 years ago~\cite{Walras1874}, its existence under mild conditions was formally proved by Nobel laureates Kenneth Arrow and G\'erald Debreu and Guggenheim fellow Lionel McKenzie in the 1950's~\cite{ArrowDebreu1954,McKenzie1954,McKenzie1959}. The early 2000's saw a surge of interest at the intersection of computer science and economics, leading researchers to adopt an algorithmic perspective on equilibrium. The relevance of algorithms to economics is neatly captured by Kamal Jain's remark: ``If your laptop cannot find it, neither can the market.''~\cite{Jain04} Nevertheless, real-world economic systems are often more akin to decentralized computing environments, where information is dispersed among agents who trade in a distributed manner, rather than relying on a central ``computer'' (the \emph{invisible hand}) with perfect information. Understanding how equilibrium can be achieved in such settings remains a challenge at the frontier of economic theory and algorithmic game theory.

Among various natural candidates of distributed market algorithms, proportional response (PR) has attracted significant attention, partly due to its simple implementation in networked markets~\cite{LLSB08}. The PR protocol is particularly intuitive in Fisher markets, where agents are divided into buyers with money and sellers with goods. It operates as an iterated process: in each round, buyers update their bids -- reflecting the amount of money they intend to spend on various goods -- based on their utility functions and the goods received in the previous round, while sellers allocate goods according to a Shapley-Shubik-style mechanism, distributing them in proportion to buyers’ bids~\cite{Branzei2021-Sigecom}. \citet{WZ2007} were the first to formally identify the power of PR in driving Arrow-Debreu markets toward competitive equilibrium. Since then, a growing body of research has extended PR's convergence guarantees to Arrow-Debreu markets, Fisher markets,
production economies and attention markets, demonstrating its robustness across a range of settings.~\cite{Zhang2011,BDX2011,cheung2018dynamics,BranzeiMehtaNisan2018NeurIPS,CheungHN19,gao2020first,BranzeiDevanurRabani2021,CheungLP2021,ZhuCX2023WWW,kolumbus2023asynchronous,li2024proportional,cheng2024tight}.

For PR in Fisher markets, all known convergence results have been established under the assumption that buyers have linear or constant elasticity of substitution (CES) utility functions, i.e., $u(\bbx) = \left(\sum_{j} a_{j} x_{j}^\rho\right)^{1/\rho}$. The competitive equilibria in such Fisher markets are characterized by the Eisenberg-Gale convex program~\cite{EisenbergGale,Eisenberg61}, which has provided valuable insights through connections to optimization algorithms. In fact, this convex program extends beyond CES utilities to capture equilibria for the broader class of homogeneous utility functions, i.e., $u(c  \bbx) = c\cdot u(\bbx)$ for any $c>0$ and good bundle $\bbx$. 
However, homogeneous utilities remain a relatively restrictive subset compared to the diverse preferences that economists have considered.
For instance, even the intuitive separable utilities of the form $u(\bbx) = \sum_j u_j(x_j)$ are generally not homogeneous.
This raises the open question: to what extent can PR be applied to markets with non-homogeneous utility functions?

The prior algorithmic success in competitive equilibrium computation suggests that gross substitute (GS) utilities are a promising target for addressing the open question. GS utilities emerged as a natural sufficient condition for the convergence of \emph{t\^atonnement}~\cite{ArrowHurwicz1958,ABH1959},
a natural market dynamics introduced by~\citet{Walras1874} alongside the concept of competitive equilibrium.
While the GS assumption enables an intuitive proof of t\^atonnement convergence,  it is not clear what the PR update rule for GS utilities should be, and whether it is sufficient to guarantee the convergence of PR.

\paragraph{Our contributions.} We propose a new PR rule that extends to any differentiable utilities in Fisher markets.
This generalizes the PR rule for substitute CES utilities.
In each round, a buyer with utility $u$ allocates monetary bids across each good $j$ in proportion to $x_j \nabla_j u(\bbx)$, where $\bbx$ is the good bundle she received in the previous round. Precisely, if the buyer's budget is $e$, her bid for good $j$ is
\begin{align*}
b_j ~=~ e \frac{x_j \nabla_j u(\bbx)}{\sum_{j'} x_{j'} \nabla_{j'} u(\bbx)}~. \numberthis \label{eqn:formula-intro}
\end{align*}
Based on the bids from all buyers, each good is allocated in proportion to the bids placed on that good. Here are a few remarks to our new PR rule:
\begin{enumerate}
    \item The PR rule \eqref{eqn:formula-intro} operates without a step-size parameter, distinguishing it from other market iterative algorithms like t\^atonnment and ascending-price auctions. 
    \item When $u$ is a CES utility function of the form $u(\bbx) = \left(\sum_{j'} a_{j'} x_{j'}^\rho\right)^{1/\rho}$ where $0<\rho\le 1$, the PR update \eqref{eqn:formula-intro} simplifies to $b_j = e \frac{a_j (x_j)^\rho}{\sum_{j'} a_j x_{j'}^\rho}$, which aligns with the PR rule analyzed in \cite{Zhang2011} and \cite{cheung2018dynamics}.
    \item When $u$ is a homogeneous utility function, i.e., $u(c\bbx) = c\cdot u(\bbx)$ for any $c>0$ and good bundle $\bbx$, by Euler's homogeneous function theorem $u(\bbx) = \sum_{j'} x_{j'} \nabla_{j'} u(\bbx)$, so the PR update \eqref{eqn:formula-intro} simplifies to $b_j = e \frac{x_j \nabla_j u(\bbx)}{u(\bbx)}$.
    \item For general utility function $u$, the vector $\bbq = \left\{e \frac{\nabla_j u(\bbx)}{\sum_{j'} x_{j'} \nabla_{j'} u(\bbx)}\right\}_{j}$ can be interpreted as a price vector at which $\bbx$ is the optimal demand bundle. Precisely, 
    %defining $\bbp = \left\{e \frac{\nabla_j u(\bbx)}{\sum_{j'} x_{j'} \nabla_{j'} u(\bbx)}\right\}_{j}$, 
    $\bbx$ is the optimal solution to $\arg \max_{\bbx'} u(\bbx')$ subject to the budget constraint $ \bbq^\top \bbx' \leq e$.
    Thus, the PR update rule can be interpreted as follows: given the allocation from the previous round, the agent assumes the current price to be the one at which this allocation is the best response. In the next round, the agent adjusts spending accordingly to optimize demand under this assumed price. \label{enu:3}
\end{enumerate} 

We show that if all buyers have GS utilities satisfying a few standard conditions, the new PR guarantees an empirical convergence rate of $\calO(1/T)$ toward competitive equilibrium prices; that is, the distance between the average of the prices over the first $T$ iterations and the equilibrium prices is bounded by $\calO(1/T)$. Our analysis also shows that the new PR converges point-wise to competitive equilibrium, but it does not provide a convergence rate.

To extend our analysis to Arrow-Debreu markets, we consider a \emph{lazy} version of the new PR update.
The idea of lazy PR was first proposed by \citet{BranzeiDevanurRabani2021}: each agent maintains a bank account where she accumulates all revenues from selling her endowment of goods. Instead of spending her entire balance in each round, she allocates only a fixed fraction (strictly between 0 and 1) of her savings.
We show the allocations of this lazy PR update converge to the allocation of a competitive equilibrium.

\vspace{0.1in}
Our proof strategy involves transforming the update rule into a price-demand system for each agent, as outlined in Remark (4).
While the price vectors may differ across agents, the corresponding demand (best response) system provides a more intuitive framework for analysis and applying the GS property. A central component of our approach is Lemma~\ref{lem:main-tech}, which establishes a new inequality in the context of gross substitutes markets characterizing the equilibrium allocation and price. Specifically, consider a scenario where each agent $i$ has a personalized price vector $\bbq_i$ and selects an optimal demand bundle $\bbx_i$ based on this price vector. If the total demand bundle remains feasible, i.e., $\sum_i x_{ij} = 1$ \footnote{WLOG, we assume each good has a unit supply.} for all goods $j$, then the following inequality holds: $\sum_{ij} x_{ij}^* p_j^* \log p_j^* \leq \sum_{ij} x_{ij}^* p_j^* \log q_{ij}$, where $x_{ij}^*$ and $p_j^*$ represent the market equilibrium allocation and price respectively. Notably, the equilibrium price vector $\bbp^*$ can be interpreted as a common price vector for all agents, with the corresponding equilibrium allocation $\bbx^*$ remaining feasible. This inequality can be viewed as defining a potential function, where the equilibrium allocation and price correspond to the minimum of this function.

\paragraph{Further Related Work.}
The study of market dynamics has long been a fundamental topic in both economics and algorithmic game theory. 

One of the foundational dynamics in this area is the t\^atonnement process, introduced by~\citet{Walras1874}, which describes a natural market adjustment driven by supply and demand. In this process, agents always obtain their preferred bundles of goods while prices adjust dynamically—rising for overdemanded goods and falling for underdemanded ones. The development of the t\^atonnement procedure progressed with the advancement of general equilibrium theory. \citet{samuelson1983foundations} modeled t\^atonnement as a continuous dynamical system, laying the foundation for subsequent theoretical studies. In seminal works by \citet{ArrowHurwicz1958} and \citet{ABH1959}, the gross substitutes property was introduced as a crucial condition to ensure the convergence of the t\^atonnement process. In the early 2000s, increasing interest at the intersection of computer science and economics led researchers to adopt an algorithmic perspective on market equilibrium. This shift prompted the exploration of discrete variants of the t\^atonnement algorithms. See e.g., \cite{CMV2005, ARY14, goktas2023tatonnement, CF2008, CCR2012, CCD2013, cole2019balancing, goktas2021consumer, dvijotham2022convergence, gao2020first, fleischer2008fast, CheungCole2018async}.

% Compared to t\^atonnement, 
Many researchers have explored an even simpler market dynamic: auction algorithms. While tâtonnement adjusts prices bidirectionally in response to supply and demand, auction algorithms follow an ascending price trajectory. The first auction algorithm was introduced by \citet{GargKap2006} for linear utility functions and was later extended to broader settings \cite{garg2006price, garg2007market, garg2004auction}, including separable concave gross substitutes utility functions and uniformly separable gross substitutes utility functions. For general weak gross substitutes (WGS) utilities, ascending-price algorithms have been studied by \citet{bei2019ascending} and \citet{garg2023auction}.

Most of these results fall within the domain of gross substitutes utility functions \cite{fisher1972gross}. However, outside this domain, certain classes of non-gross substitutes utility functions have been shown to pose significant computational challenges for finding market equilibria in either Fisher markets or Arrow-Debreu markets \cite{chen2009spending, CDDT2009, CPY2017, CSVY2006, garg2017settling, VazYan10, bei2019earning}. Despite these difficulties, substantial research effort has been dedicated to developing algorithms for computing market equilibria under various settings \cite{brainard2005compute, codenotti2005polynomial, codenotti2004computation, DPSV08, duan2015combinatorial, garg2023strongly, jain2006equilibria, Ye04}.

\paragraph{Roadmap} The rest of the paper is structured as follows. Section~\ref{sec:setting-fisher markets} introduces the model and definitions for Fisher markets. Section~\ref{sec:prd-fisher} defines the new PR rule and demonstrates its convergence to equilibrium. Section~\ref{sec:exchange-market} extends the Fisher markets result to exchange markets and shows that the lazy version of the new PR rule also converges to equilibrium.

\paragraph{Notations}
In this paper, we use bold letters like $\bbp,\bbx,\bbb$ to denote vectors. When we write $\bbv > 0$ for a vector $\bbv$, we mean every component of $\bbv$ is larger than $0$.
\section{Setting of Fisher markets} \label{sec:setting-fisher markets}

We consider a Fisher market with $n$ buyers and $m$ goods, where each good has a unit supply. Each buyer has a budget $e_i$ and  a  utility function $u_i(\cdot) : \RR_{\geq 0}^{m} \rightarrow \RR$.  We make the following assumptions regarding the utility functions:
\begin{assumption}[Utility function Properties] \label{aspt:u-std}
    The utility function $u_i(\cdot)$ satisfies: 
    \begin{itemize}
        \item $u_i(\cdot)$ is strictly concave;
        \item $u_i(\cdot)$ is strictly increasing. 
    \end{itemize}
\end{assumption}

\paragraph{Buyer Demand}
Given a price vector of goods $\bbp > 0$, and a budget $e_i$, the demand of buyer $i$ is defined as:
\[ \demand_i(\bbp, e_i) \triangleq \argmax_{\bbx_i} u_i(\bbx_i) \text{ s.t. } \bbp^\top \bbx_i \leq e_i \text{ and } \bbx_i \geq 0. \]

From Assumption~\ref{aspt:u-std}, we derive the following observation.

\begin{observation}[Full Budget Spending]\label{obs:full-spending}
    For any $\bbp > 0$, buyer $i$ will spend the entire budget, i.e., $\bbp^\top \demand_i(\bbp, e_i) = e_i$.
\end{observation}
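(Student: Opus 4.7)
The plan is to argue by contradiction, using strict monotonicity of $u_i$ to produce a strictly better feasible bundle whenever the budget is not fully spent. Let $\bbx_i^\star = \demand_i(\bbp, e_i)$, which is well-defined: since $\bbp > 0$, the budget set $\{\bbx_i \ge 0 : \bbp^\top \bbx_i \le e_i\}$ is nonempty, compact and convex, and $u_i$ is continuous (being concave on an open set containing this set, by Assumption~\ref{aspt:u-std}) and strictly concave, so the maximizer exists and is unique.

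Suppose for contradiction that $\bbp^\top \bbx_i^\star < e_i$, and set $\delta := e_i - \bbp^\top \bbx_i^\star > 0$. Pick any good $j$ (say $j=1$) and consider the perturbed bundle $\bbx_i' := \bbx_i^\star + \epsilon \bbe_j$, where $\bbe_j$ is the $j$-th standard basis vector. Choose $\epsilon := \delta / p_j > 0$, which is well-defined because $p_j > 0$. Then $\bbx_i' \ge 0$ and
\begin{equation*}
\bbp^\top \bbx_i' \;=\; \bbp^\top \bbx_i^\star + \epsilon\, p_j \;=\; \bbp^\top \bbx_i^\star + \delta \;=\; e_i,
\end{equation*}
so $\bbx_i'$ lies in the budget set. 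By strict monotonicity of $u_i$ (Assumption~\ref{aspt:u-std}) and $\bbx_i' \ge \bbx_i^\star$ with $\bbx_i' \ne \bbx_i^\star$, we have $u_i(\bbx_i') > u_i(\bbx_i^\star)$, contradicting the optimality of $\bbx_i^\star$. Hence $\bbp^\top \bbx_i^\star = e_i$.

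I do not anticipate any real obstacles: the only delicate point is confirming the existence of the demand so that the statement is meaningful, and this follows immediately from continuity of $u_i$ (implied by concavity on $\RR_{\ge 0}^m$) together with compactness of the budget set for $\bbp > 0$. Strict monotonicity then does all the work, and strict concavity is not actually needed for this particular observation, though it guarantees uniqueness of the maximizer.
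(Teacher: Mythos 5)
Your proof is correct and takes exactly the approach the paper intends: the paper states this observation without proof as an immediate consequence of strict monotonicity, and your contradiction argument (spending the slack $\delta$ on any single good $j$ with $p_j>0$ to get a strictly better feasible bundle) is the standard justification. Your side remarks on existence of the demand and on strict concavity being unnecessary here are also accurate.
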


\paragraph{Gross Substitutes and Normal Goods} We consider utility functions that satisfy the Gross Substitutes (GS) property and assume that all goods are normal. These assumptions are formalized below:
\begin{assumption} \label{def:AGS}
The utility function $u_i(\cdot)$ of each buyer $i$ in the market satisfies the following property, i.e.,
\begin{itemize}
    \item  \textbf{Gross Substitutes (GS):}  For any price vectors $\bbp$ and $\bbp'$ such that $\bbp \leq \bbp'$,  and for any good $j$ with $p_j = p'_j$:
$$\demandnb_{ij}(\bbp, e_i) \leq \demandnb_{ij}(\bbp', e_i).$$
    \item \textbf{Normal Goods:} For any budgets $e_i$ and $e_i'$ such that $e_i \leq e_i'$: $$\demandnb_{ij}(\bbp, e_i) \leq \demandnb_{ij}(\bbp, e'_i). $$
\end{itemize}
\end{assumption}

\paragraph{Corresponding Price} Since the utility function is concave, given any allocation $\bbx_i > 0$, there exists a price vector $\bbp$ such that $\bbx_i$ is the demand of buyer $i$. This is formalized in the following lemma.
\begin{lemma}
    For any allocation $\bbx_i > 0$, there exists a price $\bbp$ such that $\bbx_i = \demand_i(\bbp, e_i)$.
\end{lemma}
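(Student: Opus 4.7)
The plan is to construct the price vector directly from the gradient of $u_i$ at $\bbx_i$ and rescale it so that the budget is exhausted. Concretely, I would set $\bbg \triangleq \nabla u_i(\bbx_i)$, then define
\[
\lambda \;\triangleq\; \frac{e_i}{\bbg^\top \bbx_i}, \qquad \bbp \;\triangleq\; \lambda\, \bbg.
\]
The first task is to verify this is well-defined and strictly positive: because $u_i$ is strictly increasing on $\RR_{\geq 0}^m$ and $\bbx_i$ lies in the (relative) interior of the domain (as $\bbx_i > 0$), each component $g_j$ must be strictly positive (otherwise a tiny increase in coordinate $j$ would violate strict monotonicity). Hence $\bbg^\top \bbx_i > 0$, $\lambda > 0$, and $\bbp > 0$.

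Next I would check budget feasibility, which is immediate from the choice of $\lambda$: $\bbp^\top \bbx_i = \lambda\, \bbg^\top \bbx_i = e_i$. For optimality, I would invoke the supporting hyperplane/concavity inequality. For any alternative bundle $\bbx_i' \geq 0$ with $\bbp^\top \bbx_i' \leq e_i$, concavity of $u_i$ gives
\[
u_i(\bbx_i') \;\leq\; u_i(\bbx_i) + \bbg^\top(\bbx_i' - \bbx_i) \;=\; u_i(\bbx_i) + \tfrac{1}{\lambda}\bigl(\bbp^\top \bbx_i' - \bbp^\top \bbx_i\bigr) \;\leq\; u_i(\bbx_i),
\]
since $\bbp^\top \bbx_i' \leq e_i = \bbp^\top \bbx_i$. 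Finally, strict concavity (Assumption~\ref{aspt:u-std}) implies the budget-constrained maximizer is unique, so $\bbx_i = \demand_i(\bbp, e_i)$.

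The argument is quite short, and the only subtle point — which I view as the main (mild) obstacle — is justifying why every coordinate of $\bbg$ is strictly positive, because this is what makes the construction $\bbp = \lambda \bbg$ produce a valid price vector with $\bbp > 0$. This depends squarely on the combination of the ``strictly increasing'' hypothesis in Assumption~\ref{aspt:u-std} together with the assumption $\bbx_i > 0$ in the lemma; without either, the gradient could have a zero component and the resulting $\bbp$ would fail $\bbp > 0$, and moreover the demand at a price with a zero component would not be well-defined. If one wanted to relax differentiability, the same argument goes through verbatim with any supergradient of the concave function $u_i$ at $\bbx_i$ in place of $\nabla u_i(\bbx_i)$.
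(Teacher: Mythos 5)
Your construction is exactly the paper's: you take $\bbg = \nabla u_i(\bbx_i)$ and set $p_j = e_i g_j/(\sum_{j'} g_{j'} x_{ij'})$, which is the same price vector, and your supporting-hyperplane argument is just the KKT/first-order optimality check the paper invokes, written out explicitly. The proposal is correct and follows essentially the same route; the added verification that $\bbg > 0$ (hence $\bbp > 0$) is a worthwhile detail the paper leaves implicit.
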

\begin{proof}
    Consider any subgradient $\bbg \in \nabla u_i(x_i)$, and construct the price vector $\bbp$ by setting $p_j = e_i g_j / (\sum_j g_j x_{ij})$. We aim to show $\bbx_i$ is a maximizer of $u_i(\bbx_i)$ subject to $\bbp^\top \bbx_i \leq e_i$. This follows from the Karush-Kuhn-Tucker(KKT) conditions, where the dual variable associated with the constraint $\bbp^\top \bbx_i \leq e_i$ is given by $\lambda = (\sum_j g_j x_{ij}) / e_i$. %%By Assumption~\ref{aspt:u-std}, $\lambda$ is finite since $g_j > 0$ for all $j$.
\end{proof}

Additionally, the following lemma shows that this price vector is unique.
\begin{lemma} \label{lem:tech-1}
Given an allocation $\bbx_i$, and two price vectors $\bbp > 0$ and $\bbp' > 0$ such that $\bbx_i = \demand_i(\bbp, e_i) = \demand_i(\bbp', e_i)$,
if $x_{ij} > 0$ for good $j$, then $p_j = p'_j$. % for any $j$ where $x_{ij} > 0$.
\end{lemma}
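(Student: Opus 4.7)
My plan is to argue by contradiction, using the Gross Substitutes (GS) half of Assumption~\ref{def:AGS} together with Observation~\ref{obs:full-spending}; the normal-goods hypothesis should not be needed.

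Suppose toward contradiction that some coordinate $j^*$ has $x_{ij^*}>0$ and $p_{j^*}\neq p'_{j^*}$; by symmetry I take $p_{j^*}>p'_{j^*}$. The central idea is to introduce the coordinatewise minimum price vector $\bbq$ defined by $q_l=\min(p_l,p'_l)$, which satisfies $\bbq>0$ and $\bbq\le \bbp,\bbp'$. I will then compare $\demand_i(\bbq,e_i)$ against the common bundle $\bbx_i$.

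The main step is to establish $\demandnb_{il}(\bbq,e_i)\le x_{il}$ for every $l$. For each $l$, the coordinatewise definition of $\bbq$ ensures that $\bbq$ agrees with $\bbp$ or with $\bbp'$ at index $l$. If $q_l=p_l$, then GS applied to the pair $\bbq\le \bbp$ at index $l$ gives $\demandnb_{il}(\bbq,e_i)\le \demandnb_{il}(\bbp,e_i)=x_{il}$; otherwise $q_l=p'_l$ and GS applied to $\bbq\le \bbp'$ gives the analogous bound via $\bbp'$. Weighting by $q_l\ge 0$ and summing over $l$ yields $\bbq^\top \demand_i(\bbq,e_i)\le \bbq^\top \bbx_i$.

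To close the argument, Observation~\ref{obs:full-spending} applied at price $\bbq$ gives $\bbq^\top \demand_i(\bbq,e_i)=e_i$, while the same observation at $\bbp$ together with $\bbq\le \bbp$ and the strict inequality $q_{j^*}=p'_{j^*}<p_{j^*}$ at a coordinate where $x_{ij^*}>0$ gives $\bbq^\top \bbx_i<\bbp^\top \bbx_i=e_i$. Chaining these yields $e_i\le \bbq^\top \bbx_i < e_i$, the desired contradiction. The only nontrivial ingredient is identifying $\bbq=\bbp\wedge\bbp'$ as the right auxiliary price vector; once chosen, the GS inequalities hold coordinate-by-coordinate in exactly the direction required, and the remainder is a one-line budget-balance computation.
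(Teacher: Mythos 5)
Your proof is correct: every step checks out against the paper's definitions. The GS property as stated in Assumption~\ref{def:AGS}, applied to the pair $\bbq \le \bbp$ with $q_l = p_l$, gives exactly $\demandnb_{il}(\bbq,e_i)\le\demandnb_{il}(\bbp,e_i)=x_{il}$ (and analogously via $\bbp'$ when $q_l=p'_l$), Observation~\ref{obs:full-spending} applies at $\bbq$ since $\bbq>0$, and you are right that the normal-goods half of the assumption is never used. The paper takes a related but structurally different route: it partitions the goods into $\mathcal{Z}^+=\{j: p_j>p'_j\}$ and $\mathcal{Z}^-=\{j: p_j<p'_j\}$ and moves from $\bbp$ to $\bbp'$ in two stages through the coordinatewise \emph{maximum} (first raising the $\mathcal{Z}^-$ prices, then lowering the $\mathcal{Z}^+$ prices), arguing via GS and full budget spending that the aggregate spending on $\mathcal{Z}^+$ cannot decrease along the way; since the demand equals $\bbx_i$ at both endpoints while the prices on $\mathcal{Z}^+$ strictly drop, this forces $x_{ij}=0$ for all $j\in\mathcal{Z}^+$, and symmetrically for $\mathcal{Z}^-$. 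Your argument instead passes to the coordinatewise \emph{minimum} $\bbq=\bbp\wedge\bbp'$, obtains the pointwise demand domination $\demandnb_{il}(\bbq,e_i)\le x_{il}$ for every $l$ in a single application of GS per coordinate, and closes with one budget-balance computation. The two arguments are dual in flavor, but yours is arguably the cleaner write-up: one auxiliary price vector, no case split into $\mathcal{Z}^+$ and $\mathcal{Z}^-$, and no need to track spending restricted to a subset of goods along a two-step price path.
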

\begin{proof}
    Let $\mathcal{Z}^+ = \{j ~|~ p_j > p'_j\}$ and $\mathcal{Z}^- = \{j ~|~ p_j <  p'_j\}$. Adjusting prices $p_j$ for $j \in \mathcal{Z}^-$ to $p'_j$ will not decrease the total spending on goods in $\mathcal{Z}^+$ due to the GS property. Similarly, adjusting the prices $p_j$ for $j \in \mathcal{Z}^+$ to $p'_j$ will also not decrease the total spending of goods in $\mathcal{Z}^+$. Consequently, the total spending of goods in $\mathcal{Z}^+$  does not decrease when moving from price vector $\bbp$ to $\bbp'$. Given that $p_j > p'_j$ for $j \in \mathcal{Z}^+$, this implies $x_{ij} = 0$ for $j \in \mathcal{Z}^+$. Similarly, one can show $x_{ij} = 0$ for $j \in \mathcal{Z}^-$.
\end{proof}

Therefore, we can define the corresponding price $\bbq_i(\bbx_i)$, which is the price vector such that $\bbx_i$ is the demand of buyer $i$.

\begin{definition}[Corresponding Price]
    For any $\bbx_i > 0$, the corresponding price vector $\bbq_i(\bbx_i)$ is defined as:
    \begin{align*}
        q_{ij}(\bbx_i) = e_i \frac{ \nabla_j u_i(\bbx_i)}{\sum_{j'} x_{ij'} \nabla_{j'} u_i(\bbx_i)}. \numberthis \label{eqn:corres-price}
    \end{align*}
\end{definition}
\begin{observation} \label{obs:personal-demand-price}
For any $\bbx_i > 0$, $\bbx_i = \demand_i(\bbq_i(\bbx_i), e_i)$.
\end{observation}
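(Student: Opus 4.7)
The observation is essentially the lemma immediately preceding it, restated in the newly introduced notation $\bbq_i(\bbx_i)$. My plan is to verify directly the KKT conditions for the program $\max u_i(\bbx_i')$ subject to $\bbq_i(\bbx_i)^\top \bbx_i' \le e_i$ and $\bbx_i' \ge 0$. Because $u_i$ is strictly concave (Assumption~\ref{aspt:u-std}), these conditions are both necessary and sufficient, and identify a unique optimum.

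First I would check budget feasibility and tightness by plugging in the definition of $q_{ij}(\bbx_i)$:
\[
\bbq_i(\bbx_i)^\top \bbx_i \;=\; \sum_j e_i \cdot \frac{x_{ij} \, \nabla_j u_i(\bbx_i)}{\sum_{j'} x_{ij'} \, \nabla_{j'} u_i(\bbx_i)} \;=\; e_i,
\]
so $\bbx_i$ is feasible and spends the entire budget, in line with Observation~\ref{obs:full-spending}. Next I would verify stationarity by choosing the Lagrange multiplier
\[
\lambda \;\triangleq\; \frac{\sum_{j'} x_{ij'} \, \nabla_{j'} u_i(\bbx_i)}{e_i} \;>\; 0,
\]
which, by the very definition \eqref{eqn:corres-price} of $\bbq_i(\bbx_i)$, satisfies $\nabla_j u_i(\bbx_i) = \lambda \cdot q_{ij}(\bbx_i)$ for every $j$. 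Since $\bbx_i > 0$, complementary slackness with the nonnegativity constraints is vacuous, so the KKT conditions hold in full.

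Combining this with strict concavity, $\bbx_i$ is the unique maximizer of $u_i$ under the budget $\bbq_i(\bbx_i)^\top \bbx_i' \le e_i$, i.e., $\bbx_i = \demand_i(\bbq_i(\bbx_i), e_i)$. There is no real obstacle; the only subtlety worth flagging is that the preceding existence lemma is phrased using a subgradient $\bbg \in \nabla u_i(\bbx_i)$, whereas $\bbq_i(\bbx_i)$ is written with the gradient. This is consistent because wherever $u_i$ is differentiable the subdifferential collapses to the singleton $\{\nabla u_i(\bbx_i)\}$, so taking $\bbg = \nabla u_i(\bbx_i)$ in that lemma's construction reproduces exactly the formula \eqref{eqn:corres-price}, and Lemma~\ref{lem:tech-1} then guarantees uniqueness of this price vector on the support of $\bbx_i$.
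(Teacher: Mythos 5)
Your proof is correct and follows essentially the same route as the paper: the Observation is just the preceding existence lemma instantiated with $\bbg = \nabla u_i(\bbx_i)$, and you verify the same KKT conditions with the same multiplier $\lambda = \bigl(\sum_{j'} x_{ij'} \nabla_{j'} u_i(\bbx_i)\bigr)/e_i$. The remark reconciling the subgradient phrasing of the lemma with the gradient in \eqref{eqn:corres-price} is apt but not a new idea.
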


We note that corresponding prices may not be well-defined when $\bbx_i$ is not strictly positive. For instance, if $u_i$ is a CES utility function of the form $u_i(\bbx_i) = \left(\sum_j a_{ij} (x_{ij})^\rho\right)^{1/\rho}$ where $0<\rho<1$, then $\nabla_j u_i(\bbx_i)=+\infty$ when $x_{ij}=0$ and $a_{ij} > 0$. Suppose a strictly positive sequence $\{\bbx_i^t\}$ converges to $\bbx_i$, where $x_{ij}=0$ for some goods $j$. Although each $\bbq_i(\bbx_i^t)$ is well-defined,
the sequence $\{\bbq_i(\bbx_i^t)\}$ may not converge, or it may converge to different possible values depending on the directions of convergence of the sequence  $\{\bbx_i^t\}$. The possibility of such situation introduces further challenge for our analysis; we will highlight it as we proceed.

\paragraph{Fisher market equilibrium}
In a Fisher market, each good $j$ has a price $p_j$ and each buyer $i$ receives a bundle $\bbx_i$, where $x_{ij}$ denotes the allocation of good $j$ to buyer $i$. The concept of Fisher market equilibrium characterizes the equilibrium allocation and prices in the market. WLOG, we assume that each good has one unit of supply.
\begin{definition}[Fisher market equilibrium]
    An allocation-price pair $(\bbx^*, \bbp^*)$ is a Fisher market equilibrium if
    \begin{enumerate}
        \item each buyer $i$ receives the optimal bundle given the price, $\bbx^*_i \in \demand_i(\bbp^*, e_i)$;
        \item no good is oversold, $\sum_{i} x^*_{ij} \leq 1$ for all $j$;
        \item every good with a positive price is fully allocated, $\sum_{i} x^*_{ij} < 1$ implies $p^*_j = 0$.
    \end{enumerate}
\end{definition}
Since utility functions are strictly increasing, the following observation holds.
\begin{observation} \label{obs:pos-price}
    At equilibrium, $\bbp^* > 0$ and for each good $j$, $\sum_i x^*_{ij} = 1$. 
\end{observation}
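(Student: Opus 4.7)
The plan is to establish the two claims of the observation in sequence, using strict monotonicity of each $u_i$ for the first part and then combining it with condition 3 of the equilibrium definition for the second part.

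For $\bbp^* > 0$, I would argue by contradiction. Suppose $p_j^* = 0$ for some good $j$, and consider any buyer $i$ with equilibrium bundle $\bbx_i^* \in \demand_i(\bbp^*, e_i)$. Let $\mathbf{e}_j$ denote the $j$-th standard basis vector and look at $\bbx_i^* + \varepsilon \mathbf{e}_j$ for arbitrary $\varepsilon > 0$. This perturbed bundle is still nonnegative, and since $p_j^* = 0$ we have $(\bbp^*)^\top (\bbx_i^* + \varepsilon \mathbf{e}_j) = (\bbp^*)^\top \bbx_i^* \leq e_i$, so it remains budget-feasible. By the strictly increasing assumption in Assumption~\ref{aspt:u-std}, $u_i(\bbx_i^* + \varepsilon \mathbf{e}_j) > u_i(\bbx_i^*)$, contradicting optimality of $\bbx_i^*$. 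Hence $p_j^* > 0$ for every good $j$.

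For the equal-to-one claim, I would apply condition 3 of the equilibrium definition contrapositively: $\sum_i x_{ij}^* < 1 \Rightarrow p_j^* = 0$ becomes $p_j^* > 0 \Rightarrow \sum_i x_{ij}^* \geq 1$. Combining this with the just-established positivity of $\bbp^*$ gives $\sum_i x_{ij}^* \geq 1$ for every $j$, and condition 2 gives the reverse inequality $\sum_i x_{ij}^* \leq 1$. The two inequalities together yield $\sum_i x_{ij}^* = 1$.

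There is no real obstacle here; the statement is essentially immediate from the definitions. The only point that merits a line of care is that the perturbation $\bbx_i^* + \varepsilon \mathbf{e}_j$ is well-defined in the feasible set and produces a strict utility gain, which is exactly what "strictly increasing" guarantees. No use of the GS property, the normal-goods property, or Observation~\ref{obs:full-spending} is needed for this observation—only Assumption~\ref{aspt:u-std} and the three conditions of the equilibrium definition.
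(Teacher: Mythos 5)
Your proof is correct and matches the paper's intent exactly: the paper offers no explicit proof, merely noting that the observation follows from the utilities being strictly increasing, and your free-perturbation argument (a zero-priced good could be acquired in arbitrary extra quantity without violating the budget, contradicting optimality of $\bbx_i^*$) plus the contrapositive of condition 3 combined with condition 2 is precisely the standard reasoning being invoked. Nothing further is needed.
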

\section{Proportional Response Dynamics in Fisher markets} \label{sec:prd-fisher}
Proportional Response Dynamics is an iterative procedure. At each iteration $t$, the dynamics are characterized by the following steps: 
\begin{enumerate}
    \item Budget Spending: Each buyer $i$ allocates their budget $e_i$ among the goods. Let $b^t_{ij}$ denotes the spending of buyer $i$ on good $j$ at round $t$. It follows that $\sum_j b_{ij}^t = e_i$ for each buyer $i$.
    \item Goods Allocation: Based on the spending $\{b^t_{ij}\}_{\{i, j\}}$, the amount of good $j$ allocated buyer $i$ is proportional to their spending on good $j$: $x_{ij}^t = b^t_{ij} / (\sum_i b_{ij}^t)$.
    \item Spending Update: For the next round, the spending $b_{ij}^{t+1}$ is updated to be proportional to $x_{ij}^t \nabla_j u_i(\bbx_i^t)$.
\end{enumerate}
The process begins with an initial spending allocation $b_{ij}^0 > 0$ for all buyer $i$ and good $j$.

Mathematically, the entire procedure can be expressed as:
\begin{align*}
    &b^{t+1}_{ij} = e_i \frac{x_{ij}^t \nabla_j u_i(\bbx_i^t)}{\sum_{j'} x_{ij'}^t \nabla_{j'} u_i(\bbx_i^t)}\hspace*{0.1in}\text{ for all agent $i$ and good $j$,}\\
    &~~~~~~\hspace{0.4in} \text{where $x_{ij}^t = \frac{b_{ij}^t}{p_j^t}$ and $p_j^t = \sum_i b_{ij}^t$.} \numberthis \label{eq:pr}
\end{align*}
The update rule can be rewritten as follows:
    $$b_{ij}^{t+1} = x_{ij}^t q_{ij}(\bbx_i^t) \hspace*{0.1in}\text{ for all agent $i$ and good $j$},$$ 
    where $q_{ij}(\bbx_i^t)$ is the corresponding price of allocation $\bbx_i^t$ (see \eqref{eqn:corres-price}). This implies that when buyer $i$ receives bundle $\bbx_i^t$ in round $t$, in next round, buyer $i$ will make the optimal spending assuming the price is the corresponding price of $\bbx_i^t$.

\begin{example}
 When $u_i$ is a homogeneous utility function, i.e., $u_i(c\bbx_i) = c\cdot u_i(\bbx_i)$ for any $c > 0$ and bundle $\bbx_i$, by Euler's homogeneous function theorem, $u_i(\bbx_i) = \sum_{j'} x_{ij'} \nabla_{j'} u_i(\bbx_i)$, the update rule simplifies to $b_{ij}^{t+1} = e_i \frac{x_{ij}^t \nabla_{j} u_i(\bbx_i^t)}{u_i(\bbx_i^t)}$. Moreover, for CES utility functions, $u_i(\bbx_i) = (\sum_j a_{ij} x_{ij}^{\rho_i})^{1 / \rho_i}$, the update rule can further be simplified to: $b^{t+1}_{ij} = e_i \frac{a_{ij} (x_{ij}^t)^{\rho_i}}{\sum_{j'} a_{ij'} (x_{ij'}^t)^{\rho_i}}.$
\end{example}

\subsection{Convergence Results}
Our main theorems demonstrate that the prices and allocations in Proportional Response Dynamics converge to a market equilibrium.
\begin{theorem} \label{thm:main-1}
The sequence of price vectors $\bbp^t$ converges to the equilibrium price $\bbp^*$. Additionally, the average price converges at a rate of $\calO(1/T)$.
\end{theorem}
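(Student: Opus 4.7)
The plan is a one-step Lyapunov argument powered by Lemma~\ref{lem:main-tech}. I would define the potential
\[
\Phi^t \;=\; \sum_{i,j} x^*_{ij}\, p^*_j\, \log \frac{x^*_{ij} p^*_j}{b^t_{ij}}.
\]
By Observation~\ref{obs:full-spending}, $\sum_j x^*_{ij} p^*_j = e_i = \sum_j b^t_{ij}$ for every buyer $i$, so $\Phi^t$ is a nonnegative combination of KL-divergences of probability distributions and $\Phi^t \ge 0$.

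Using $b^{t+1}_{ij} = x^t_{ij}\, q_{ij}(\bbx^t_i)$ and $b^t_{ij} = x^t_{ij}\, p^t_j$, I would then expand
\[
\Phi^t - \Phi^{t+1} \;=\; \sum_{i,j} x^*_{ij}\, p^*_j\, \log \frac{b^{t+1}_{ij}}{b^t_{ij}} \;=\; \sum_{i,j} x^*_{ij}\, p^*_j\, \log \frac{q_{ij}(\bbx^t_i)}{p^t_j}.
\]
At every round, $\bbx^t_i$ is a best response to the personalized price $\bbq_i(\bbx^t_i)$ (Observation~\ref{obs:personal-demand-price}) and the allocations satisfy $\sum_i x^t_{ij} = 1$ by the goods-allocation step, so Lemma~\ref{lem:main-tech} applies and gives $\sum_{i,j} x^*_{ij} p^*_j \log q_{ij}(\bbx^t_i) \ge \sum_{i,j} x^*_{ij} p^*_j \log p^*_j$. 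Combined with $\sum_i x^*_{ij} = 1$ (Observation~\ref{obs:pos-price}), this yields the per-step drop
\[
\Phi^t - \Phi^{t+1} \;\ge\; \sum_j p^*_j\, \log \frac{p^*_j}{p^t_j} \;=:\; D(\bbp^*, \bbp^t) \;\ge\; 0,
\]
where the generalized-KL quantity $D$ is nonnegative because $\sum_j p^*_j = \sum_j p^t_j = \sum_i e_i$.

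The two claims now follow quickly. Since $\Phi^t$ is nonincreasing and bounded below, it converges, so the drops satisfy $D(\bbp^*, \bbp^t) \to 0$; Pinsker's inequality (applied after normalizing by $\sum_i e_i$ to get probability distributions) then upgrades this to $\bbp^t \to \bbp^*$, giving pointwise convergence. Telescoping yields $\sum_{t=0}^{T-1} D(\bbp^*, \bbp^t) \le \Phi^0$, and joint convexity of $D$ in its second argument combined with Jensen's inequality gives $D(\bbp^*, \bar\bbp^T) \le \Phi^0 / T = \calO(1/T)$ for the average price $\bar\bbp^T = \tfrac1T \sum_{t<T} \bbp^t$, which is the claimed rate.

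The main obstacle is Lemma~\ref{lem:main-tech} itself --- this is where the gross-substitutes and normal-goods structure must be used in a nontrivial way, and the fact that each buyer uses a different personalized price $\bbq_i(\bbx^t_i)$ rather than a common price is what makes the inequality delicate. Everything else above is elementary algebra on the update rule plus standard convex-analysis facts about KL. A minor technicality is ensuring $\bbx^t_i > 0$ throughout (so that $\bbq_i(\bbx^t_i)$ is defined); this follows by an easy induction from $b^0_{ij} > 0$ and the fact that strict monotonicity of $u_i$ gives $\nabla_j u_i > 0$.
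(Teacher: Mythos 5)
Your proposal is correct and follows essentially the same route as the paper: the potential $\Phi^t$ is exactly the paper's KL divergence $\sum_{ij} b^*_{ij}\log(b^*_{ij}/b^t_{ij})$, the per-step drop bounded below by $\sum_j p^*_j\log(p^*_j/p^t_j)$ is the paper's inequality \eqref{eq:kl-decrease}, and the telescoping-plus-convexity step for the $\calO(1/T)$ rate is identical. Your treatment of the pointwise convergence (via Pinsker after normalization) and of the positivity of $\bbx^t_i$ is, if anything, slightly more explicit than the paper's.
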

\begin{theorem}\label{thm:main-2}
The sequence of allocations $\bbx^t$ converges to the equilibrium allocation $\bbx^*$.
\end{theorem}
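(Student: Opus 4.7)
The plan is to leverage Theorem~\ref{thm:main-1} (price convergence $\bbp^t \to \bbp^*$) to upgrade to allocation convergence. Since strict concavity of the $u_i$ together with $\bbp^* > 0$ (Observation~\ref{obs:pos-price}) implies a unique demand at $\bbp^*$, the equilibrium allocation $\bbx^*$ is unique, and it suffices to show that every subsequential limit of $\{\bbx^t\}$ equals $\bbx^*$. Compactness is immediate since $\sum_i x^t_{ij}=1$ forces $\bbx^t$ to lie in a compact set, so I extract a subsequence with $\bbx^{t_k}\to\bbx^\infty$ and (after a further extraction) $\bbx^{t_k+1}\to\bbx^{\infty,+}$.

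The PR update can be rewritten as $x^{t+1}_{ij} p^{t+1}_j = x^t_{ij}\, q_{ij}(\bbx^t_i)$. Combined with Theorem~\ref{thm:main-1}, passing to the limit would give $x^{\infty,+}_{ij}\, p^*_j = x^\infty_{ij}\, q_{ij}(\bbx^\infty_i)$. On the positive support of $\bbx^\infty_i$ this yields $q_{ij}(\bbx^\infty_i) = p^*_j$ (after arguing $\bbx^{\infty,+} = \bbx^\infty$, see below), which by Observation~\ref{obs:personal-demand-price} identifies $\bbx^\infty_i$ as the demand at price $\bbp^*$. A complementary KKT-type inequality on the zero support of $\bbx^\infty_i$ (forced by the same limit identity combined with the GS property) then certifies $\bbx^\infty_i \in \demand_i(\bbp^*, e_i)$, and strict concavity collapses this to $\bbx^\infty_i = \bbx^*_i$, independent of the extracted subsequence.

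To argue $\bbx^{\infty,+}=\bbx^\infty$ and thus convert the one-step relation into a fixed-point identity, I would use Lemma~\ref{lem:main-tech}. Setting $\Phi(t) := \sum_{ij} x^*_{ij} p^*_j \log q_{ij}(\bbx^t_i)$, the lemma yields $\Phi(t) \ge \sum_{ij}x^*_{ij}p^*_j\log p^*_j$. I would show (and expect this to be the Lyapunov identity already underlying Theorem~\ref{thm:main-1}) that $\Phi$ is non-increasing along the dynamics with telescoping decrease controlling a KL-type distance between successive iterates; summability of the decrease then forces $\bbx^{t+1}-\bbx^t \to 0$, so any limits of $\bbx^{t_k}$ and $\bbx^{t_k+1}$ coincide.

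The main obstacle is Step~2: the map $\bbx_i \mapsto q_{ij}(\bbx_i)$ can blow up on the boundary when $x^\infty_{ij}=0$ and $\nabla_j u_i$ is unbounded there (as in CES with $\rho<1$, the situation flagged after Observation~\ref{obs:personal-demand-price}). I handle this by working with the pre-normalized product $x^t_{ij}\nabla_j u_i(\bbx^t_i)/\bigl(\sum_{j'} x^t_{ij'}\nabla_{j'}u_i(\bbx^t_i)\bigr)$ rather than $q_{ij}$ itself, normalizing numerator and denominator by $\nabla_{j^\star}u_i(\bbx^\infty_i)$ for some coordinate $j^\star$ with $x^\infty_{ij^\star}>0$ (which exists because $\sum_j x^\infty_{ij}p^*_j = e_i>0$). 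Coordinates $j$ with $x^\infty_{ij}=0$ contribute $0$ to the numerator in the limit, and the GS property together with Lemma~\ref{lem:main-tech} ensures the corresponding marginal inequality $\nabla_j u_i(\bbx^\infty_i) \le \lambda_i p^*_j$ needed for the KKT argument. Once this boundary accounting is done carefully, Steps~1--4 go through without further technicality.
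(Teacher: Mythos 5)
Your high-level plan (a KL/Lyapunov decrease plus subsequential limits, with the boundary behaviour of $\bbq_i(\cdot)$ correctly flagged as the main obstacle) points in the right direction, but the two steps that carry all the weight are asserted rather than proved, and each of them is essentially the whole difficulty. First, the claim that summability of the Lyapunov decrease forces $\bbx^{t+1}-\bbx^t\to 0$ does not follow from what the proof of Theorem~\ref{thm:main-1} gives you: the per-round decrement in \eqref{eq:3} is $\sum_j p^*_j\log\frac{p^*_j}{p^t_j}-\sum_{ij}b^*_{ij}\log\frac{p^*_j}{q_{ij}(\bbx^t_i)}$, which is weighted by the \emph{equilibrium} spending $b^*_{ij}=x^*_{ij}p^*_j$; it carries no information about coordinates with $x^*_{ij}=0$, and even on the support of $\bbx^*$ it controls $q_{ij}(\bbx^t_i)$ relative to $p^*_j$ only in aggregate (the individual summands of $\sum_{ij}b^*_{ij}\log\frac{p^*_j}{q_{ij}(\bbx^t_i)}$ are not sign-definite), not $\bbx^{t+1}$ relative to $\bbx^t$. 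Second, on the zero support of $\bbx^\infty_i$ the limit identity $x^{\infty,+}_{ij}p^*_j=x^\infty_{ij}\,q_{ij}(\bbx^\infty_i)$ degenerates to ``$0=0\cdot\infty$'' and gives nothing, and the ``complementary KKT-type inequality $\nabla_j u_i(\bbx^\infty_i)\le\lambda_i p^*_j$'' you invoke there is equivalent to the conclusion $\bbx^\infty_i=\demand_i(\bbp^*,e_i)$ you are trying to establish; saying it is ``forced by the GS property together with Lemma~\ref{lem:main-tech}'' is circular without an actual derivation. Note also that even the denominator $\sum_{j'}x^{t_k}_{ij'}\nabla_{j'}u_i(\bbx^{t_k}_i)$ need not converge when $\bbx^\infty_i$ touches the boundary, so the pointwise limit $q_{ij}(\bbx^{t_k}_i)\to q_{ij}(\bbx^\infty_i)$ that you use on the positive support also requires justification.

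The paper closes exactly these gaps by a different route: it argues by contradiction that a subsequential limit $\tilde{\bbx}_i\ne\bbx^*_i$ would make the decrement term $\sum_{j}b^*_{ij}\log\frac{p^*_j}{q_{ij}(\bbx^{t_k}_i)}$ bounded away from $0$ infinitely often (via the strict-inequality clause of Lemma~\ref{lem:main-technique}), contradicting the fact that the KL divergence in \eqref{eq:3} is nonnegative and non-increasing. When $\tilde{\bbx}_i>0$ this is a continuity argument; when $\tilde{\bbx}_i$ has zero coordinates it requires projecting $\bbq_i(\bbx^{t_k}_i)$ into $[\epsilon\bbp^*,\frac{1}{\epsilon}\bbp^*]$ (Corollary~\ref{cor:main-technique}) together with the auxiliary Lemmas~\ref{lem:tech-2}, \ref{lem:tech-3} and \ref{lem:converge-allocation-1} to dispose of the degenerate sub-case $\demand_i(\tilde{\bbq}^+_i,e_i)=\bbx^*_i$. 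If you want to pursue your fixed-point route, you would need to supply analogues of all of this machinery; as written, the proposal is incomplete at its two critical junctures.
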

The proof of Theorem~\ref{thm:main-1} is presented in Section~\ref{sec:proof-convergence-1}, and the proof of Theorem~\ref{thm:main-2} is presented in Section~\ref{sec:proof-convergence}.

\subsection{Proof of Theorem~\ref{thm:main-1}} \label{sec:proof-convergence-1}

To prove Theorem~\ref{thm:main-1}, we start with a simple observation.
\begin{observation} \label{obs:b-x-pos}
At any round $t$, both the spending $b_{ij}^t$ and the allocation $x_{ij}^t$ are strictly positive for any buyer $i$ and good $j$. 
\end{observation}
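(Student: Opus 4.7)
The plan is to prove Observation~\ref{obs:b-x-pos} by induction on $t$, using the fact that positivity of spending at one round implies positivity of allocations at that round, which in turn implies positivity of spending at the next round via the update rule.

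For the base case, I would simply invoke the initialization specified in the dynamics: $b_{ij}^0 > 0$ for all $i,j$ by assumption on the starting configuration.

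For the inductive step, assume $b_{ij}^t > 0$ for every buyer $i$ and good $j$. Then $p_j^t = \sum_i b_{ij}^t$ is a sum of strictly positive numbers, hence $p_j^t > 0$, and consequently $x_{ij}^t = b_{ij}^t/p_j^t > 0$ for every $i,j$. Now I invoke the update rule
\[
b_{ij}^{t+1} \;=\; e_i \,\frac{x_{ij}^t \nabla_j u_i(\bbx_i^t)}{\sum_{j'} x_{ij'}^t \nabla_{j'} u_i(\bbx_i^t)}.
\]
Since $\bbx_i^t > 0$, I need to argue that every partial derivative $\nabla_j u_i(\bbx_i^t)$ is strictly positive. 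This follows from Assumption~\ref{aspt:u-std}: $u_i$ is strictly increasing and (strictly) concave, so for any subgradient component $g_j$ at $\bbx_i^t$, the subgradient inequality yields $u_i(\bbx_i^t + h e_j) \leq u_i(\bbx_i^t) + g_j h$ for all small $h > 0$; if $g_j = 0$ this contradicts strict monotonicity. Hence every term $x_{ij'}^t \nabla_{j'} u_i(\bbx_i^t)$ is strictly positive, so both numerator and denominator in the update rule are strictly positive, and $b_{ij}^{t+1} > 0$ follows (using $e_i > 0$).

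There is no real obstacle here; the only mildly subtle point is the justification that strict monotonicity forces each partial derivative to be \emph{strictly} positive rather than merely nonnegative, which is handled by the one-line subgradient argument above. The rest of the observation is an immediate induction.
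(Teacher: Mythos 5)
Your induction is correct and is precisely the argument the paper leaves implicit when it calls this a ``simple observation'' (the paper gives no proof): positivity of $b^t$ gives positivity of $p^t$ and hence of $x^t$, and strict monotonicity plus concavity forces $\nabla_j u_i(\bbx_i^t)>0$, so the update rule preserves positivity. No gaps; the supergradient justification that the partial derivatives are strictly (not merely weakly) positive is the right level of care.
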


The following key technical lemma focuses on the concept of personal pricing, where each buyer $i$ is assigned a distinct price vector, $\bbq_i$. We consider all possible personal pricing vectors such that the allocation remains feasible, i.e., $\sum_i \demandnb_{ij}(\bbq_i, e_i) = 1$ for every good $j$. The equilibrium price vector is a specific instance of such a personal pricing choice. The following lemma, proven in Section~\ref{sec:main-tech}, demonstrates that the equilibrium price vector $\bbp^*$ minimizes the expression $\min_{\bbq} \sum_{ij} \demandnb_{ij}(\bbp^*, e_i) p^*_j \log q_{ij}$ among all feasible personal pricing choices.
\begin{lemma} \label{lem:main-tech}
    Let $\bbq_i > 0$ denote the personal price vector for buyer $i$. If $\sum_i \demandnb_{ij}(\bbq_i, e_i) = 1$ for all goods $j$, then 
    \begin{align*}
        \sum_{ij} \demandnb_{ij}(\bbp^*, e_i) p^*_j \log p^*_j \leq  \sum_{ij} \demandnb_{ij}(\bbp^*, e_i) p^*_j \log q_{ij}.
    \end{align*}
    Equality holds only if $\demand_i(\bbq_i, e_i) = \demand_i(\bbp^*, e_i)$ for all buyers.
\end{lemma}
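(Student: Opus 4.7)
The plan is to recognize the inequality as stating that $\bbp^*$, treated as every buyer's personal price, is the minimizer of the functional $\Phi(\{\bbq_i\}) := \sum_{ij} x^*_{ij}\, p^*_j \log q_{ij}$ over the set of \emph{feasible} personal-price assignments, i.e., those $\{\bbq_i\}$ with $\sum_i \demandnb_{ij}(\bbq_i, e_i) = 1$ for every good $j$. By Observation~\ref{obs:pos-price}, $\sum_i x^*_{ij} = 1$, so the value of $\Phi$ at $\bbq_i = \bbp^*$ collapses to $\sum_j p^*_j \log p^*_j$, matching the LHS of the claim; and this choice is itself feasible by the same observation. The proof thus reduces to showing that any other feasible $\{\bbq_i\}$ yields a weakly larger value of $\Phi$, with strict increase unless the induced demands coincide with $\bbx^*$ for every buyer.

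As a first step I would verify the Lagrangian first-order condition at $\bbq_i = \bbp^*$. Introducing multipliers $\mu_j$ for the feasibility constraints, stationarity in $q_{ij}$ gives $x^*_{ij}\, p^*_j / q_{ij} = \sum_{j'} \mu_{j'}\, \partial \demandnb_{ij'} / \partial q_{ij}$. Differentiating the full-budget identity $\sum_{j'} q_{ij'}\, \demandnb_{ij'}(\bbq_i, e_i) = e_i$ from Observation~\ref{obs:full-spending} with respect to $q_{ij}$ yields $\sum_{j'} q_{ij'}\, \partial \demandnb_{ij'}/\partial q_{ij} = -\demandnb_{ij}$, so the choice $\mu_j = -p^*_j$ evaluated at $\bbq_i = \bbp^*$ produces $x^*_{ij} - x^*_{ij} = 0$, confirming stationarity. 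This identifies $\bbp^*$ as a critical point but not yet as a global minimum.

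The hard part is upgrading this local critical-point condition to a global optimality statement, and this is where the gross-substitutes and normal-goods assumptions (Assumption~\ref{def:AGS}) must enter. My proposed approach is a hybrid path/exchange argument: for each buyer $i$, interpolate from $\bbp^*$ to $\bbq_i$ via a monotone deformation in (log-)price coordinates, and use GS together with Lemma~\ref{lem:tech-1} (uniqueness of the corresponding price) to track how each buyer's demand moves, showing that any deviation from the equilibrium personal-price profile that preserves aggregate feasibility causes a non-negative change in $\Phi$. A backup strategy, should the interpolation prove unwieldy, is to derive per-buyer concavity inequalities of the form $u_i(x^*_i) \le u_i(y_i) + \nabla u_i(y_i)^\top (x^*_i - y_i)$ together with its $y_i \leftrightarrow x^*_i$ swap (where $y_i := \demand_i(\bbq_i, e_i)$), sum these over buyers, exploit the cancellations from $\sum_i x^*_{ij} = \sum_i y_{ij} = 1$ and the full-budget identity $\bbq_i^\top y_i = e_i$, and invoke a Bregman/KL-divergence identity to bridge the resulting linear price comparison with the logarithmic form of the target inequality.

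For the equality case, strict concavity of each $u_i$ together with the uniqueness of the corresponding price (Lemma~\ref{lem:tech-1}) should force every step-inequality in either proof route to be tight exactly when $\demand_i(\bbq_i, e_i) = \demand_i(\bbp^*, e_i)$ for every buyer $i$, giving the stated ``only if'' direction.
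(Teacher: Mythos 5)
Your high-level framing (that $\bbp^*$ minimizes $\sum_{ij} x^*_{ij}p^*_j\log q_{ij}$ over feasible personal-price profiles) matches the paper's, and your primary route -- a monotone price deformation from $\bbp^*$ to $\bbq_i$ controlled by gross substitutes -- is the same kind of argument the paper runs. But there is a genuine gap in how you plan to use feasibility. Feasibility ($\sum_i \demandnb_{ij}(\bbq_i,e_i)=1$ for all $j$) is a \emph{joint} constraint across buyers; it is not preserved along a per-buyer interpolation from $\bbp^*$ to $\bbq_i$, so you cannot argue that ``any deviation preserving aggregate feasibility causes a non-negative change in $\Phi$'' by deforming one buyer's price at a time. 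Indeed, the per-buyer statement $\sum_j p^*_j \demandnb_{ij}(\bbp^*,e_i)\log(p^*_j/q_{ij})\le 0$ is simply false without feasibility (one good, $x(p)=e/p$, gives $e\log(p/q)>0$ for $p>q$). The missing idea is the correct per-buyer inequality, which must carry a correction term:
\begin{align*}
\sum_j p^*_j\, \demandnb_{ij}(\bbp^*,e_i)\log\frac{p^*_j}{q_{ij}} \;\le\; \sum_j p^*_j\bigl[\demandnb_{ij}(\bbq_i,e_i)-\demandnb_{ij}(\bbp^*,e_i)\bigr],
\end{align*}
valid for \emph{arbitrary} $\bbq_i>0$ (this is the paper's Lemma~\ref{lem:main-technique}). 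Only after summing over buyers does feasibility enter, killing the right-hand side via $\sum_i\demandnb_{ij}(\bbq_i,e_i)=\sum_i\demandnb_{ij}(\bbp^*,e_i)=1$. Your interpolation then has a well-defined per-step target (the paper discretizes it into proportional scalings of the extremal-ratio set of goods, using GS, normal goods, budget exhaustion, and $r'\log(r/r')\le r-r'$), whereas tracking ``the change in $\Phi$'' alone per step is not the right quantity.

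Two smaller points. The Lagrangian stationarity check is fine but does no work toward global optimality, and the constraint set here is not convex in any useful parametrization, so first-order reasoning will not close the argument. Your backup route via concavity inequalities $u_i(x^*_i)\le u_i(y_i)+\nabla u_i(y_i)^\top(x^*_i-y_i)$ is the Eisenberg--Gale-style bridge between utility gradients and log-prices; it works for homogeneous utilities but the whole point of this lemma is to handle GS utilities that are not homogeneous, so that bridge is unavailable and the backup should be discarded.
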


\begin{proof}[Proof of Theorem~\ref{thm:main-1}]
    To establish the convergence of prices, we analyze the KL divergence between the equilibrium spending $b_{ij}^*$ and the spending at round $t+1$:
    \begin{align*}
        \sum_{ij} b_{ij}^* \log \frac{b_{ij}^*}{b_{ij}^{t+1}} &= \sum_{ij} b_{ij}^* \log \frac{b_{ij}^*}{e_i \frac{x_{ij}^t \nabla_j u_i(\bbx_i^t)}{\sum_{j'} x_{ij'}^t \nabla_{j'} u_i(\bbx_i^t)}} ~~~~~~~~~\hspace{1in} \text{(by \eqref{eq:pr})}\\
        &= \sum_{ij} b_{ij}^* \log \frac{b_{ij}^*}{b_{ij}^t} - \sum_{ij} b_{ij}^* \log \frac{p_j^*}{p_j^t} + \sum_{ij} b_{ij}^*  \log \frac{p_j^*}{e_i \frac{ \nabla_j u_i(\bbx_i^t)}{\sum_{j'} x_{ij'}^t \nabla_{j'} u_i(\bbx_i^t)}} \\
        &= \sum_{ij} b_{ij}^* \log \frac{b_{ij}^*}{b_{ij}^t} - \sum_{j} p_{j}^* \log \frac{p_j^*}{p_j^t} + \sum_{ij} b_{ij}^*  \log \frac{p_j^*}{e_i \frac{ \nabla_j u_i(\bbx_i^t)}{\sum_{j'} x_{ij'}^t \nabla_{j'} u_i(\bbx_i^t)}}. \numberthis \label{eq:3}
    \end{align*}
    Using Lemma~\ref{lem:main-tech}, we will show that $\sum_{ij} b_{ij}^*  \log \frac{p_j^*}{e_i \frac{ \nabla_j u_i(\bbx_i^t)}{\sum_{j'} x_{ij'}^t \nabla_{j'} u_i(\bbx_i^t)}}$ is non-positive, for this implies that the KL divergence for the spending decreases monotonically.

    In order to apply Lemma~\ref{lem:main-tech}, we use the corresponding price vector $\bbq_i(\cdot)$ for buyer $i$:
    \begin{align*}
    q_{ij}(\bbx_i^t) = e_i \frac{ \nabla_j u_i(\bbx_i^t)}{\sum_{j'} x_{ij'}^t \nabla_{j'} u_i(\bbx_i^t)}. 
    \end{align*}
    
    By Observation~\ref{obs:personal-demand-price}, and given that $\bbx_i^t > 0$, we have $\bbx_i^t = \demand_{i}(\bbq_i(\bbx_i^t), e_i)$, and this implies $\sum_i \demandnb_{ij}(\bbq_i(\bbx_i^t), e_i) = \sum_i x_{ij}^t = 1$ for all goods $j$. 
    Therefore, by Lemma~\ref{lem:main-tech}, $$\sum_{ij} b_{ij}^*  \log \frac{p_j^*}{e_i \frac{ \nabla_j u_i(\bbx_i^t)}{\sum_{j'} x_{ij'}^t \nabla_{j'} u_i(\bbx_i^t)}} = \sum_{ij} x_{ij}^* p_j^* \log \frac{p_j^*}{q_{ij}(\bbx_i^t)} \leq 0.$$ 
    
    Thus, the KL divergence decreases monotonically,
    \begin{align*}
        \sum_{ij} b_{ij}^* \log \frac{b_{ij}^*}{b_{ij}^{t+1}} \leq \sum_{ij} b_{ij}^* \log \frac{b_{ij}^*}{b_{ij}^t} - \sum_{j} p_{j}^* \log \frac{p_j^*}{p_j^t}. \numberthis \label{eq:kl-decrease}
    \end{align*}
    This implies the sequence $\bbp^t$ converges to $\bbp^*$. Furthermore, by the convexity of the KL divergence:
    \begin{align*}
        \sum_{j} p_j^* \log \frac{p_j^*}{\frac{1}{T}\sum_t p_j^t} \leq \frac{1}{T} \sum_{t = 1}^T\sum_{ij} p_j^* \log \frac{p_j^*}{ p_j^t} \overset{(a)}{\leq} \frac{1}{T}  \sum_{ij}b_{ij}^* \log \frac{b_{ij}^*}{b_{ij}^0},
    \end{align*}
    (a) follows by summing \eqref{eq:kl-decrease} over $t$.
    This establishes the convergence rate of $O(1/T)$ for the average price vector.

\end{proof}

\subsection{Proof of Lemma~\ref{lem:main-tech}} \label{sec:main-tech}
In this section, we provide the proof of Lemma~\ref{lem:main-tech}. The proof begins by decomposing Lemma~\ref{lem:main-tech} into individual buyers.
\begin{lemma} \label{lem:main-technique}
    For any GS utility $u_i(\cdot)$, and for any price vectors  $\bbp > 0$ and $\bbq > 0$, the following inequality holds:
    \[ \sum_j p_j \demandnb_{ij}(\bbp, e_i) \log \frac{p_j}{q_j}  \leq \sum_j p_j\left[\demandnb_{ij}(\bbq, e_i) - \demandnb_{ij}(\bbp, e_i)\right]. \]
    Equality occurs only if $\demand_{i}(\bbq, e_i) = \demand_{i}(\bbp, e_i)$.
\end{lemma}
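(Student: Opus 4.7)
The plan is to interpret both sides of the inequality as features of a one-parameter function along a log-linear interpolation between $\bbp$ and $\bbq$, and then reduce the claim to a tangent-line bound that is driven by the gross substitutes property. Write $\bbx := \demand_i(\bbp, e_i)$ and $\bbx' := \demand_i(\bbq, e_i)$, noting $\sum_j p_j x_j = \sum_j q_j x'_j = e_i$ by Observation~\ref{obs:full-spending}. For $\alpha \in [0,1]$, set $p_j(\alpha) := p_j^{1-\alpha} q_j^{\alpha}$, $\bbx(\alpha) := \demand_i(\bbp(\alpha), e_i)$, and $H(\alpha) := \sum_j p_j x_j(\alpha)$. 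Since $\frac{d p_j(\alpha)}{d\alpha} = p_j(\alpha)\log(q_j/p_j)$, differentiating the budget identity $\sum_j p_j(\alpha) x_j(\alpha) = e_i$ gives
$$\sum_j p_j(\alpha) \frac{d x_j(\alpha)}{d\alpha} = \sum_j p_j(\alpha) x_j(\alpha) \log(p_j/q_j),$$
which at $\alpha = 0$ yields $H'(0) = \sum_j p_j x_j \log(p_j/q_j)$, the left-hand side. Also $H(1)-H(0) = \sum_j p_j(x'_j - x_j)$, the right-hand side. The lemma is thus equivalent to the tangent bound $H(1) \geq H(0) + H'(0)$.

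The main step, and the principal obstacle, is to show this tangent bound; I plan to establish it by proving $H$ is convex on $[0,1]$. As a warm-up and key building block, I would first prove the single-coordinate special case: when $\bbp, \bbq$ differ only in coordinate $k$ with $q_k = t p_k$, the claim reduces to $-x_k \log t \leq x'_k(1-t)$. Gross substitutes (used together with the differentiated budget identity) yields $\partial \log \demandnb_{ik}/\partial \log p_k \leq -1$ at every price, which integrates to the monotonicity that $p_k \demandnb_{ik}(\bbp, e_i)$ is non-increasing in $p_k$ at fixed other prices. Equivalently, $x'_k\, t \leq x_k$ for $t\geq 1$ and $x'_k\, t \geq x_k$ for $t\leq 1$; substituting these into $-x_k \log t \leq x'_k(1-t)$ reduces the claim to the elementary inequality $\log s \leq s - 1$ applied with $s = 1/t$ or $s = t$ as appropriate.

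For the general multi-coordinate case I would lift this to convexity of $H$ by writing
$$H(\alpha) = e_i \sum_j s_j(\alpha)\, (p_j/q_j)^{\alpha},$$
where $s_j(\alpha) := p_j(\alpha) x_j(\alpha)/e_i$ is the spending share at $\bbp(\alpha)$. Each $(p_j/q_j)^\alpha$ is convex in $\alpha$, so the remaining difficulty is to control the variation of the shares $s_j(\alpha)$. Here I would use the full elasticity structure implied by Assumption~\ref{def:AGS}: cross elasticities $\epsilon_{jk} \geq 0$ for $j \neq k$, the budget-induced identity $\sum_j s_j \epsilon_{jk} = -s_k$ (which forces $\epsilon_{kk} \leq -1$), and normal goods to handle wealth-effect contributions via the Slutsky decomposition. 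I expect the single-coordinate monotonicity to serve as the right local building block, applied at each $\bbp(\alpha)$ and composed across coordinates using the log-linearity of the path.

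Finally, the equality case traces through these steps: equality in $\log s \leq s - 1$ forces $s = 1$, equality in the spending-monotonicity forces $\epsilon_{kk} = -1$ and $\epsilon_{jk} = 0$ for $j \neq k$, and together with strict concavity of $u_i$ these force the demand to be constant along the path. Lemma~\ref{lem:tech-1} then yields $\demand_i(\bbq, e_i) = \demand_i(\bbp, e_i)$.
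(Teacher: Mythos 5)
Your reduction is clean and correct as far as it goes: with $p_j(\alpha)=p_j^{1-\alpha}q_j^{\alpha}$ and $H(\alpha)=\sum_j p_j\,\demandnb_{ij}(\bbp(\alpha),e_i)$, differentiating the budget identity does give $H'(0)=\sum_j p_j\demandnb_{ij}(\bbp,e_i)\log(p_j/q_j)$ and $H(1)-H(0)$ equals the right-hand side, so the lemma is exactly the tangent bound $H(1)\ge H(0)+H'(0)$; your single-coordinate warm-up is also correct. However, the central step of your plan --- that $H$ is convex on $[0,1]$ --- is false under the paper's assumptions. Take $n=1$ buyer with the CES utility $u(\bbx)=(x_1^{1/2}+x_2^{1/2})^2$ (gross substitutes, normal, strictly concave and increasing), $\bbp=(1,1)$ and $\bbq=(e^3,1)$. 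Then $p_jx_j(\bbp(\alpha))\propto p_j(\alpha)^{-1}/\sum_k p_k(\alpha)^{-1}$ gives
\begin{equation*}
H(\alpha)=e_i\,\frac{e^{-6\alpha}+1}{e^{-3\alpha}+1},
\end{equation*}
and a direct computation (e.g.\ $H(0.9)>\tfrac12\bigl(H(0.8)+H(1.0)\bigr)$ numerically, or the sign of the second derivative for $e^{-3\alpha}$ small) shows $H$ is strictly concave near $\alpha=1$. The tangent bound still holds here, but convexity does not, so it cannot be the mechanism; the obstruction is precisely the variation of the shares $s_j(\alpha)$ that you flag as ``the remaining difficulty'' and do not resolve. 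A secondary issue is that your argument assumes differentiable demand (elasticities $\epsilon_{jk}$, Slutsky terms), which Assumptions~\ref{aspt:u-std} and~\ref{def:AGS} do not grant, and $\log\demandnb_{ik}$ is problematic where demands vanish.

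For contrast, the paper never differentiates along a path and never needs convexity: it moves from $\bbq$ to $\bbp$ by finitely many discrete adjustments that rescale only the goods attaining the extremal ratio $p_j/p'_j$, proves the per-step inequality of Lemma~\ref{lem:min-side} in which the logarithmic term is weighted by the \emph{fixed} target demand $\demandnb_{ij}(\bbp,e_i)$ rather than the current demand along the path, and telescopes. That fixed weighting is what lets a one-sided GS spending-monotonicity argument (essentially your single-coordinate building block, applied to a whole extremal set at once) close the gap without any second-order information. If you want to rescue your continuous formulation, you would need to prove the secant inequality $H(1)-H(0)\ge H'(0)$ directly --- for instance by showing $\sum_j p_j\demandnb_{ij}(\bbp,e_i)\log\bigl(p_j/p_j(\alpha)\bigr)\le \sum_j p_j\bigl[\demandnb_{ij}(\bbp(\alpha),e_i)-\demandnb_{ij}(\bbp,e_i)\bigr]$ for each $\alpha$, with fixed weights $\demandnb_{ij}(\bbp,e_i)$ as in the paper --- rather than via convexity of $H$.
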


Using Lemma~\ref{lem:main-technique}, we now proceed to prove Lemma~\ref{lem:main-tech}.
\begin{proof}[Proof of Lemma~\ref{lem:main-tech}]
By Lemma~\ref{lem:main-technique},
 \begin{align*}
        \sum_{ij} \demandnb_{ij}(\bbp^*, e_i) p^*_j \log \frac{p^*_j}{q_{ij}} &\leq  \sum_{ij} p^*_j\left[\demandnb_{ij}(\bbq_i, e_i) - \demandnb_{ij}(\bbp^*, e_i)\right]  = \sum_{j} p^*_j(1 - 1)  = 0. %\qedhere
\end{align*}
\end{proof}

\begin{proof}[Proof of Lemma~\ref{lem:main-technique}]
The proof of Lemma~\ref{lem:main-technique} proceeds by analyzing a price adjustment procedure that begins with the initial price vector $\bbq$ and ends with the final price vector $\bbp$. This procedure is divided into several steps. In each step, the current price vector $\bbp'$ is updated to a new price vector $\bbp''$ by applying one of the following two operations:
\begin{itemize}
    \item Operation $1$: Let $\mathcal{S}$ be the set of the goods with the \emph{maximal} ratio $\frac{p_j}{p'_j} > 1$, if any. \emph{Increase} the $p'_j$ to $p''_{j}$ proportionally for all $j \in \mathcal{S}$  until the ratio $\frac{p_j}{p''_{j}}$ matches the second maximal ratio or $1$.
    \item Operation $2$: Let $\mathcal{S}$ be the set of the goods with the \emph{minimal} ratio $\frac{p_j}{p'_j} < 1$, if any. \emph{Decrease} the $p'_j$ to $p''_{j}$ proportionally  for all $j \in \mathcal{S}$ until the ratio $\frac{p_j}{p''_{j}}$ matches the second minimal ratio or $1$.
\end{itemize}

For both operations, the following inequality is established: 
\begin{lemma} \label{lem:min-side}
        $\sum_j p_j \demandnb_{ij}(\bbp, e_i) \log \frac{p''_j}{p'_j} \leq \sum_{j} p_j(\demandnb_{ij}(\bbp', e_i) - \demandnb_{ij}(\bbp'', e_i)) $. Equality occurs only if $\demand_{i}(\bbp', e_i) = \demand_{i}(\bbp'', e_i)$.
\end{lemma}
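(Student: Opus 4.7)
My plan is to prove the inequality for Operation~1 in detail, with Operation~2 following by a mirror-image argument. Fix $\alpha > 1$ so that $p''_j = \alpha p'_j$ for $j \in \mathcal{S}$ and $p''_j = p'_j$ otherwise, and let $\beta_{\max} = \max_j p_j / p'_j$, a value attained exactly on $\mathcal{S}$. The stopping rule of Operation~1 gives $\beta_{\max}/\alpha = \max(\beta_2, 1)$ where $\beta_2$ is the second-largest ratio, yielding two structural facts I will use repeatedly: $\alpha \leq \beta_{\max}$, and $p_j \leq (\beta_{\max}/\alpha)\, p''_j$ for every $j$, with equality on $\mathcal{S}$. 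The LHS simplifies to $C \log \alpha$ where $C := \sum_{j \in \mathcal{S}} p_j \demandnb_{ij}(\bbp, e_i)$, since $\log(p''_j/p'_j) = 0$ off $\mathcal{S}$.

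The overall plan is to establish the chain
\[
\text{LHS} = C \log \alpha \;\leq\; \tfrac{\beta_{\max}}{\alpha}\, A \log \alpha \;\leq\; \tfrac{\beta_{\max}}{\alpha}\, A (\alpha - 1) \;\leq\; \text{RHS},
\]
where $A := \sum_{j \in \mathcal{S}} p'_j \demandnb_{ij}(\bbp', e_i)$ and $B := \sum_{j \in \mathcal{S}} p'_j \demandnb_{ij}(\bbp'', e_i)$. The middle step is the elementary inequality $\log \alpha \leq \alpha - 1$, and the third step (lower bound on the RHS) comes out of a direct computation: since $\bbp'' \geq \bbp'$ with equality outside $\mathcal{S}$, GS gives $\demandnb_{ij}(\bbp'', e_i) \geq \demandnb_{ij}(\bbp', e_i)$ for $j \notin \mathcal{S}$, and combined with full-budget spending at both prices this produces the budget identity $\sum_{j \notin \mathcal{S}} p'_j (\demandnb_{ij}(\bbp'', e_i) - \demandnb_{ij}(\bbp', e_i)) = A - \alpha B$, hence $A \geq \alpha B$. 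Then splitting the RHS sum into $\mathcal{S}$ and $\bar{\mathcal{S}}$, replacing $p_j$ by $\beta_{\max} p'_j$ on $\mathcal{S}$ and bounding $p_j \leq (\beta_{\max}/\alpha) p'_j$ off $\mathcal{S}$ (using that the demand differences there are non-positive, so the bound tightens rather than loosens the estimate), a short algebraic collapse yields $\text{RHS} \geq \beta_{\max} A (\alpha - 1)/\alpha$.

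The first inequality in the chain, $C \leq (\beta_{\max}/\alpha) A$, is the delicate step, and it is where both the Normal Goods assumption and degree-zero homogeneity of demand enter. I plan to introduce the auxiliary price vector $\tilde{\bbp} := (\beta_{\max}/\alpha)\, \bbp''$, which by the structural facts satisfies $\tilde{p}_j = p_j$ on $\mathcal{S}$ and $\tilde{p}_j \geq p_j$ off $\mathcal{S}$. Applying GS to the comparison $\bbp \leq \tilde{\bbp}$ (equality on $\mathcal{S}$) yields $\demandnb_{ij}(\bbp, e_i) \leq \demandnb_{ij}(\tilde{\bbp}, e_i)$ for $j \in \mathcal{S}$. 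Degree-zero homogeneity of Marshallian demand rewrites $\demandnb_{ij}(\tilde{\bbp}, e_i) = \demandnb_{ij}(\bbp'', \alpha e_i/\beta_{\max})$, and since $\alpha/\beta_{\max} \leq 1$, the Normal Goods property gives $\demandnb_{ij}(\bbp'', \alpha e_i/\beta_{\max}) \leq \demandnb_{ij}(\bbp'', e_i)$. Chaining produces $\demandnb_{ij}(\bbp, e_i) \leq \demandnb_{ij}(\bbp'', e_i)$ for $j \in \mathcal{S}$; multiplying by $p'_j$ and summing bounds $\sum_{j \in \mathcal{S}} p'_j \demandnb_{ij}(\bbp, e_i)$ by $B$, and combining with $B \leq A/\alpha$ yields $C = \beta_{\max} \sum_{j \in \mathcal{S}} p'_j \demandnb_{ij}(\bbp, e_i) \leq (\beta_{\max}/\alpha) A$, as required.

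Operation~2 is handled by the mirror argument: $\beta_{\min}$ replaces $\beta_{\max}$, the GS comparison flips direction because $\bbp'' \leq \bbp'$, and the auxiliary price vector becomes $(\beta_{\min}/\alpha)\, \bbp''$ with $\beta_{\min}/\alpha \leq 1$, so homogeneity together with Normal Goods yields the opposite comparison $\demandnb_{ij}(\bbp, e_i) \geq \demandnb_{ij}(\bbp'', e_i)$ for $j \in \mathcal{S}$; both sides of the target inequality are now negative and the sign flips preserve the entire chain. The equality clause follows because $\log \alpha < \alpha - 1$ strictly for every $\alpha \neq 1$, so equality in the chain forces $\alpha = 1$, i.e., $\bbp'' = \bbp'$, hence $\demand_i(\bbp', e_i) = \demand_i(\bbp'', e_i)$. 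I expect the main obstacle to be the auxiliary-price argument of the third paragraph, which is the only place all three properties (GS, homogeneity, Normal Goods) must interlock; this is precisely why Assumption~\ref{def:AGS} cannot be weakened to GS alone.
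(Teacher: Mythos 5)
Your proof of the inequality itself is correct, and it is essentially the paper's own argument in a different parametrization (with $\alpha = r/r'$ and $\beta_{\max} = r$ in the paper's notation): your auxiliary price $\tilde{\bbp} = (\beta_{\max}/\alpha)\,\bbp'' = r'\bbp''$ together with the GS--homogeneity--normal-goods chain giving $\demandnb_{ij}(\bbp, e_i) \leq \demandnb_{ij}(\bbp'', e_i)$ on $\mathcal{S}$ is exactly Claim~\ref{clm:spending}; your budget identity $A - \alpha B = \sum_{j \notin \mathcal{S}} p'_j\left[\demandnb_{ij}(\bbp'', e_i) - \demandnb_{ij}(\bbp', e_i)\right] \geq 0$ is the step around \eqref{eqn::6}; your lower bound $\mathrm{RHS} \geq (\beta_{\max}/\alpha) A(\alpha-1) = (r-r')A$ is \eqref{eqn::1}; and $\log\alpha \leq \alpha - 1$ is $r'\log(r/r') \leq r - r'$. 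The Operation~2 mirror is also handled the same way in the paper.

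There is, however, a genuine gap in your treatment of the equality clause. You assert that equality in the chain forces $\alpha = 1$ because $\log\alpha < \alpha - 1$ for $\alpha \neq 1$. But the step where this elementary inequality is used is $(\beta_{\max}/\alpha) A \log\alpha \leq (\beta_{\max}/\alpha) A(\alpha - 1)$, and this holds with equality whenever $A = 0$, regardless of $\alpha$. The degenerate case $A = \sum_{j \in \mathcal{S}} p'_j \demandnb_{ij}(\bbp', e_i) = 0$ (the buyer spends nothing on $\mathcal{S}$ at $\bbp'$) is entirely possible with $\alpha > 1$, and then every link of your chain reads $0 \leq 0$, so equality in the lemma can occur with $\bbp' \neq \bbp''$. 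The conclusion of the equality clause is still true, but it needs a separate argument: from $0 \leq \alpha B \leq A = 0$ and $\bbp', \bbp'' > 0$ you get $\demandnb_{ij}(\bbp', e_i) = \demandnb_{ij}(\bbp'', e_i) = 0$ for $j \in \mathcal{S}$, and equality in the budget identity forces each nonnegative term $p'_j\left[\demandnb_{ij}(\bbp'', e_i) - \demandnb_{ij}(\bbp', e_i)\right]$, $j \notin \mathcal{S}$, to vanish, hence $\demand_i(\bbp', e_i) = \demand_i(\bbp'', e_i)$. This is precisely the second case the paper analyzes explicitly after \eqref{eqn::7}; since the equality clause propagates into Lemma~\ref{lem:main-technique} and thence into the convergence proofs, this case cannot be skipped.
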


Once Lemma~\ref{lem:min-side} is proven, Lemma~\ref{lem:main-technique} follows by summing the terms over all steps and performing a telescoping sum.

\paragraph{Proof of Lemma~\ref{lem:min-side} for Operation $1$}
 
    We start by introducing some additional notation. Let $\mathcal{S} \in [m]$ denote the set of goods with the largest ratio between $\bbp$ and $\bbp'$, let $r$ denote this ratio, and let $r'$ denote the largest ratio for goods not in $\mathcal{S}$ (or $1$ if all the goods are in $\mathcal{S}$):
    \[ \mathcal{S} = \argmax_j p_j / p'_j,~~~ r = \max_j p_j / p'_j, ~~~\text{and}~~~ r' = \max \left\{ \max_{j \notin \mathcal{S}} p_j / p'_j ~,~ 1 \right\}.\]
    The price $\bbp''$ is constructed as follows:
    \begin{align*}
        p''_j = \begin{cases}
        p_j / r' & \text{if } j \in \mathcal{S}\\
        p'_j & \text{o.w. } .
    \end{cases}
    \end{align*}
    We begin with the following claim.
    \begin{claim}\label{clm:spending}
        $\sum_{j \in \mathcal{S}} p_j \demandnb_{ij}(\bbp, e_i) \leq r' \sum_{j \in \mathcal{S}}  p''_j \demandnb_{ij}(\bbp'', e_i)$.
    \end{claim}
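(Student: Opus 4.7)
The idea is to introduce an auxiliary price vector $\hat{\bbp} := r' \bbp''$ and use it as a bridge between $\bbp$ and $\bbp''$. The plan is to chain together three facts: (i) the GS property relating $\bbp$ and $\hat{\bbp}$, (ii) scale invariance of demand under simultaneous scaling of prices and budget, and (iii) the normal goods assumption relating demand at budget $e_i/r'$ to demand at budget $e_i$.

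First I would record the key inequalities between $\bbp$ and $\hat{\bbp}$. Since $\hat{p}_j = r' \cdot (p_j / r') = p_j$ for $j \in \mathcal{S}$ and $\hat{p}_j = r' p'_j \geq p_j$ for $j \notin \mathcal{S}$ (by definition of $r'$ as an upper bound on the remaining ratios $p_j/p'_j$), we have $\hat{\bbp} \geq \bbp$ with equality on $\mathcal{S}$. Applying GS from Assumption~\ref{def:AGS} to each good $j \in \mathcal{S}$ then yields
\[
\sum_{j \in \mathcal{S}} p_j \demandnb_{ij}(\bbp, e_i) \;\leq\; \sum_{j \in \mathcal{S}} p_j \demandnb_{ij}(\hat{\bbp}, e_i) \;=\; \sum_{j \in \mathcal{S}} \hat{p}_j \demandnb_{ij}(\hat{\bbp}, e_i).
\]

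Next, I would rewrite $\demand_i(\hat{\bbp}, e_i) = \demand_i(r' \bbp'', e_i)$. Because scaling all prices by $r'$ in the definition of $\demand_i$ is equivalent to rescaling the budget constraint, we get $\demand_i(r' \bbp'', e_i) = \demand_i(\bbp'', e_i / r')$. Since $r' \geq 1$ we have $e_i / r' \leq e_i$, so the normal goods assumption gives $\demandnb_{ij}(\bbp'', e_i/r') \leq \demandnb_{ij}(\bbp'', e_i)$ for every good $j$. Substituting $\hat{p}_j = r' p''_j$ and chaining gives
\[
\sum_{j \in \mathcal{S}} \hat{p}_j \demandnb_{ij}(\hat{\bbp}, e_i) \;=\; r' \sum_{j \in \mathcal{S}} p''_j \demandnb_{ij}(\bbp'', e_i/r') \;\leq\; r' \sum_{j \in \mathcal{S}} p''_j \demandnb_{ij}(\bbp'', e_i),
\]
which combined with the previous inequality yields the claim.

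The only slightly subtle step is the combined use of scale invariance and the normal goods property: one must resist the temptation to apply GS directly between $\bbp$ and $\bbp''$, since their coordinatewise comparison is not uniform on $\mathcal{S}$ versus its complement. Passing through the auxiliary vector $\hat{\bbp}$, which agrees with $\bbp$ on $\mathcal{S}$ by construction, is what makes GS applicable and is the main technical point of the argument.
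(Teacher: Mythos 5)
Your proposal is correct and uses exactly the same ingredients as the paper's proof (gross substitutes for the price movement off $\mathcal{S}$, scale invariance of demand, and the normal goods assumption for the budget change); the only difference is cosmetic — you scale $\bbp''$ up to $\hat{\bbp}=r'\bbp''$ and compare at budget $e_i$, whereas the paper scales $\bbp$ down to $\bbp/r'$ at budget $e_i/r'$ and then raises both prices and budget in one step.
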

    \begin{proof}
        Consider the term $\demandnb_{ij}(\bbp / r', e_i / r')$. Since both prices and budget are scaled by the same factor, we have $\demandnb_{ij}(\bbp / r', e_i / r') = \demandnb_{ij}(\bbp, e_i)$. Observe that $\bbp / r'$ and $\bbp''$ are identical for those goods in $\mathcal{S}$.  Now, increase the prices of goods $j \notin \mathcal{S}$ from $\bbp / r'$ to $\bbp''$ and increase the budget from $e_i / r'$ to $e_i$.
        By Assumption~\ref{def:AGS}, these adjustments increase the demand for goods in $S$. Thus, $\demandnb_{ij}(\bbp, e_i) \leq \demandnb_{ij}(\bbp'', e_i)$ for $j \in \mathcal{S}$. The claim follows as $p_j = r' p''_j$ by our construction of $\bbp''$ for $j \in \mathcal{S}$.
    \end{proof}
    Now, we aim to establish the following inequality: 
    \begin{align*}
        r' \sum_{j\in \mathcal{S}} p''_j \demandnb_{ij}(\bbp'', e_i) \log (r / r') \leq  \sum_{j} p_j\left[\demandnb_{ij}(\bbp', e_i) - \demandnb_{ij}(\bbp'', e_i)\right].\numberthis \label{eqn::1}
    \end{align*}
    This inequality directly implies Lemma~\ref{lem:min-side} when combined with Claim~\ref{clm:spending}. (Note that the terms in the LHS sum in Lemma~\ref{lem:min-side} are $0$ for $j \notin \mathcal{S}$) 

    We begin by deriving a lower bound for the RHS of the inequality:
    \begin{align*}
        &\sum_{j} p_j\left[\demandnb_{ij}(\bbp', e_i) - \demandnb_{ij}(\bbp'', e_i)\right]\\
        &~~~~~~=\sum_{j\in \mathcal{S}} p_j\left[\demandnb_{ij}(\bbp', e_i) - \demandnb_{ij}(\bbp'', e_i)\right] + \sum_{j \notin \mathcal{S}} p_j\left[\demandnb_{ij}(\bbp', e_i) - \demandnb_{ij}(\bbp'', e_i)\right] \\
        &~~~~~~=\sum_{j \in \mathcal{S}} r' p''_j \left[\demandnb_{ij}(\bbp', e_i) - \demandnb_{ij}(\bbp'', e_i)\right] + \sum_{j \notin \mathcal{S}} p_j \left[\demandnb_{ij}(\bbp', e_i) - \demandnb_{ij}(\bbp'', e_i)\right]  \\
        &~~~~~~\geq \sum_{j \in \mathcal{S}} r' p''_j \left[\demandnb_{ij}(\bbp', e_i) - \demandnb_{ij}(\bbp'', e_i)\right] + \sum_{j \notin \mathcal{S}} r' p''_j\left[\demandnb_{ij}(\bbp', e_i) - \demandnb_{ij}(\bbp'', e_i)\right] \\
        &~~~~~~ = \sum_{j} r' p''_j\left[\demandnb_{ij}(\bbp', e_i) - \demandnb_{ij}(\bbp'', e_i)\right].
    \end{align*}
    The second equality follows from our construction of $\bbp''$, and the inequality holds because $p_j \leq r' p''_j$ and $\demandnb_{ij}(\bbp', e_i) \leq \demandnb_{ij}(\bbp'', e_i)$ for $j \notin \mathcal{S}$, as implied by the GS property. 
    
    Using Observation~\ref{obs:full-spending}, the above expression can be further simplified:
    \begin{align*}
         \sum_{j} r' p''_j\left[\demandnb_{ij}(\bbp', e_i) - \demandnb_{ij}(\bbp'', e_i)\right] &= r'\left( \sum_{j \in \mathcal{S}} (r / r') p'_j  \demandnb_{ij}(\bbp', e_i) + \sum_{j \notin S} p'_j  \demandnb_{ij}(\bbp', e_i) - e_i\right) \\
         &= r'\left( \sum_{j \in \mathcal{S}} \left[(r - r') / r'\right] p'_j  \demandnb_{ij}(\bbp', e_i) + \sum_{j} p'_j  \demandnb_{ij}(\bbp', e_i) - e_i\right) \\
         &= (r - r') \sum_{j \in \mathcal{S}} p'_j  \demandnb_{ij}(\bbp', e_i).
    \end{align*}
    
    To establish \eqref{eqn::1}, it suffices to demonstrate:
    \begin{align*}r' \sum_{j\in \mathcal{S}} \demandnb_{ij}(\bbp'', e_i) p''_j \log (r / r') \leq (r - r') \sum_{j \in \mathcal{S}} p'_j  \demandnb_{ij}(\bbp', e_i). \numberthis \label{eqn::7}
    \end{align*}
    The inequality follows from the fact that $r' \log (r/r') \leq (r - r')$, as well as the following derivation: 
    \begin{align*}
        \sum_{j\in \mathcal{S}} \demandnb_{ij}(\bbp'', e_i) p''_j &= e_i - \sum_{j\notin \mathcal{S}} \demandnb_{ij}(\bbp'', e_i) p''_j \\
        &\leq e_i - \sum_{j\notin \mathcal{S}} \demandnb_{ij}(\bbp', e_i) p'_j  \numberthis \label{eqn::6}\\
        & =  \sum_{j\in \mathcal{S}} \demandnb_{ij}(\bbp', e_i) p'_j.
    \end{align*}
    The inequality holds due to the GS property.  
    
    When equality holds in \eqref{eqn::1}, two cases arise:  either $r = r'$, which implies $\bbp' = \bbp''$; or $\sum_{j\in \mathcal{S}} \demandnb_{ij}(\bbp'', e_i) p''_j  = \sum_{j\in \mathcal{S}} \demandnb_{ij}(\bbp', e_i) p'_j = 0$ in \eqref{eqn::7} and \eqref{eqn::6}, which implies $\demandnb_{ij}(\bbp'', e_i) = \demandnb_{ij}(\bbp', e_i)$ for $j \notin \mathcal{S}$ as $p'_j = p''_j$. In either case,  $\demand_{i}(\bbp', e_i) = \demand_{i}(\bbp'', e_i)$.

\paragraph{Proof of Lemma~\ref{lem:min-side} for Operation 2} The argument proceeds similarly to the previous case. Let
 \[ \mathcal{S} = \arg\min_j p_j / p'_j,~~~ r = \min_j p_j / p'_j, ~~~\text{and}~~~ r' = \min\left\{\min_{j \notin \mathcal{S}} p_j / p'_j~,~ 1\right\},\]
    Construct the price vector $\bbp''$ as follows:
    \begin{align*}
        p''_j = \begin{cases}
        p_j / r' & \text{if } j \in \mathcal{S}\\
        p'_j & \text{o.w. } .
    \end{cases}
    \end{align*}
    In this context, the analog of Claim~\ref{clm:spending} is
    \begin{claim}
        $\sum_{j \in \mathcal{S}} p_j \demandnb_{ij}(\bbp, e_i) \geq r' \sum_{j \in \mathcal{S}}  p''_j \demandnb_{ij}(\bbp'', e_i)$.
    \end{claim}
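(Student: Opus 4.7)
The plan is to mirror the argument used for Claim~\ref{clm:spending}, but reverse the directions of the monotonicity arguments throughout. The key observation is that in Operation~2 we have $r' \leq 1$ (since $r' \leq 1$ by its definition as the minimum with $1$ as a cap), and $p''_j = p_j / r'$ for $j \in \mathcal{S}$ while $p''_j = p'_j$ for $j \notin \mathcal{S}$. By construction, $p_j / p'_j \geq r'$ for $j \notin \mathcal{S}$, so $p_j / r' \geq p'_j = p''_j$ for those goods, while $p_j / r' = p''_j$ for $j \in \mathcal{S}$. Hence $\bbp / r' \geq \bbp''$ with equality on $\mathcal{S}$.

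The first step would be to use the scale invariance of demand: $\demandnb_{ij}(\bbp/r', e_i/r') = \demandnb_{ij}(\bbp, e_i)$. Next, I would apply the GS part of Assumption~\ref{def:AGS} to the transition from $\bbp''$ to $\bbp / r'$ (which raises only the prices outside $\mathcal{S}$ while keeping prices inside $\mathcal{S}$ fixed); this yields $\demandnb_{ij}(\bbp'', e_i/r') \leq \demandnb_{ij}(\bbp/r', e_i/r')$ for $j \in \mathcal{S}$. Then I would apply the Normal Goods part of Assumption~\ref{def:AGS} to the budget change from $e_i$ to $e_i / r' \geq e_i$ (using $r' \leq 1$): this gives $\demandnb_{ij}(\bbp'', e_i) \leq \demandnb_{ij}(\bbp'', e_i/r')$ for $j \in \mathcal{S}$. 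Chaining these inequalities yields $\demandnb_{ij}(\bbp'', e_i) \leq \demandnb_{ij}(\bbp, e_i)$ for every $j \in \mathcal{S}$.

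Finally, I would multiply both sides by $p''_j \geq 0$, sum over $j \in \mathcal{S}$, and use the identity $p_j = r' p''_j$ for $j \in \mathcal{S}$ to conclude
\[
r' \sum_{j \in \mathcal{S}} p''_j \demandnb_{ij}(\bbp'', e_i) \;\leq\; r' \sum_{j \in \mathcal{S}} p''_j \demandnb_{ij}(\bbp, e_i) \;=\; \sum_{j \in \mathcal{S}} p_j \demandnb_{ij}(\bbp, e_i),
\]
which is exactly the claimed inequality.

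The only real subtlety, and the step I would be most careful about, is tracking the direction of the GS and normal-goods inequalities: in Operation~1 one raises prices outside $\mathcal{S}$ and the budget, pushing demand for goods in $\mathcal{S}$ up, whereas in Operation~2 we effectively lower prices outside $\mathcal{S}$ (going from $\bbp/r'$ down to $\bbp''$) and lower the budget (from $e_i/r'$ down to $e_i$), pushing demand for goods in $\mathcal{S}$ down. Once the direction is fixed correctly via the scale-and-compare trick, the rest is essentially a symmetric computation.
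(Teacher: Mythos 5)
Your proof is correct and follows exactly the route the paper intends: it mirrors the proof of Claim~\ref{clm:spending} with the monotonicity directions reversed (the paper itself omits the argument for Operation~2, merely presenting this claim as the ``analog''). The scale-invariance step $\demandnb_{ij}(\bbp/r', e_i/r') = \demandnb_{ij}(\bbp, e_i)$, the GS comparison from $\bbp''$ up to $\bbp/r'$ (equal on $\mathcal{S}$), and the normal-goods comparison from $e_i$ up to $e_i/r'$ are all applied in the correct directions, and the final summation using $p_j = r' p''_j$ on $\mathcal{S}$ closes the argument.
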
 
    Since $r < r'$, it follows that $\log (r / r') < 0$. Again, our objective is to prove 
    \begin{align*}
        r' \sum_{j\in \mathcal{S}} \demandnb_{ij}(\bbp'', e_i) p''_j \log (r / r') \leq  \sum_{j} p_j\left[\demandnb_{ij}(\bbp', e_i) - \demandnb_{ij}(\bbp'', e_i)\right].\numberthis \label{eqn::5}
    \end{align*}
    The lower bound on the RHS still follows:
    \begin{align*}
        &\sum_{j} p_j\left[\demandnb_{ij}(\bbp'', e_i) - \demandnb_{ij}(\bbp', e_i)\right]\\
        &~~~~~~=\sum_{j \in \mathcal{S}} r' p''_j \left[\demandnb_{ij}(\bbp', e_i) - \demandnb_{ij}(\bbp'', e_i)\right] + \sum_{j \notin \mathcal{S}} p_j\left[\demandnb_{ij}(\bbp', e_i) - \demandnb_{ij}(\bbp'', e_i)\right]  \\
        &~~~~~~\geq \sum_{j \in \mathcal{S}} r' p''_j \left[\demandnb_{ij}(\bbp', e_i) - \demandnb_{ij}(\bbp'', e_i)\right] + \sum_{j \notin \mathcal{S}} r' p''_j\left[\demandnb_{ij}(\bbp', e_i) - \demandnb_{ij}(\bbp'', e_i)\right] \\
        &~~~~~~ = \sum_{j} r' p''_j(\demandnb_{ij}(\bbp', e_i) - \demandnb_{ij}(\bbp'', e_i)) \\
        &~~~~~~ = (r - r') \sum_{j \in \mathcal{S}} p'_j  \demandnb_{ij}(\bbp', e_i).
    \end{align*}
    The inequality holds because, for $j \notin \mathcal{S}$, $p_j \geq r' p''_j$ and $\demandnb_{ij}(\bbp', e_i) \geq \demandnb_{ij}(\bbp'', e_i)$, which follows from the GS property.
    Then, to prove \eqref{eqn::5}, it suffices to show:
    $$r' \sum_{j\in \mathcal{S}} \demandnb_{ij}(\bbp'', e_i) p''_j \log (r / r') \leq (r - r') \sum_{j \in \mathcal{S}} p'_j  \demandnb_{ij}(\bbp', e_i).$$ 
    The inequality is satisfied because $r' \log (r/r') \leq (r - r')$, $r- r' < 0$ and $\sum_{j\in \mathcal{S}} \demandnb_{ij}(\bbp'', e_i) p''_j  \geq  \sum_{j\in \mathcal{S}} \demandnb_{ij}(\bbp', e_i) p'_j$ which follows from  the GS property.

    This completes the proof of Lemma~\ref{lem:main-technique}.
    \end{proof}
  Furthermore,  the following corollary follows directly from our proof and will be useful in the subsequent argument. Here, we consider two prices $\bbp$ and $\bbq$, and define $\tilde{\bbq}$ to be the projection of $\bbq$ into the range $[\epsilon \bbp, \frac{1}{\epsilon} \bbp]$. Lemma~\ref{lem:main-technique} holds for both $(\bbp, \bbq)$ and $(\bbp, \tilde{\bbq})$. This corollary demonstrates that the inequality gap in Lemma~\ref{lem:main-technique} is larger for $(\bbp, \bbq)$ then for $(\bbp, \tilde{\bbq})$. This is because we can apply Operation 1 and 2 to first adjust $\bbq$ to $\tilde{\bbq}$, and then perform further adjustments.
    \begin{corollary} \label{cor:main-technique}
    For any GS utility $u_i(\cdot)$, for any price vectors $\bbp > 0$ and $\bbq > 0$, and any $1 > \epsilon > 0$, define $\tilde{\bbq}$ as follows: $$\tilde{q}_j = \begin{cases} \epsilon p_j \text{ if } q_j < \epsilon p_j \\ \frac{1}{\epsilon} p_j \text{ if } q_j > \frac{1}{\epsilon} p_j \\ q_j \text{ o.w.} \end{cases}, $$ 
    Then, we have the following inequality:
    \begin{align*} &\sum_j p_j \demandnb_{ij}(\bbp, e_i) \log \frac{p_j}{q_j}  - \sum_j p_j(\demandnb_{ij}(\bbq, e_i) - \demandnb_{ij}(\bbp, e_i))  \\
    &\hspace{1.5in}\leq \sum_j p_j \demandnb_{ij}(\bbp, e_i) \log \frac{p_j}{\tilde{q}_j}  - \sum_j p_j(\demandnb_{ij}(\tilde{\bbq}, e_i) - \demandnb_{ij}(\bbp, e_i)).
    \end{align*}
\end{corollary}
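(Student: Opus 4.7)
The plan is to reduce the target inequality to the equivalent form
\begin{align*}
\sum_j p_j\, \demandnb_{ij}(\bbp, e_i) \log \frac{\tilde{q}_j}{q_j} \;\leq\; \sum_j p_j \left[\demandnb_{ij}(\bbq, e_i) - \demandnb_{ij}(\tilde{\bbq}, e_i)\right], \tag{$\star$}
\end{align*}
which one obtains by algebraic rearrangement of the corollary's statement (the two $\demandnb_{ij}(\bbp)$-weighted log terms combine into $\log(\tilde{q}_j/q_j)$, and the $-\demandnb_{ij}(\bbp)$ contributions on the two sides cancel). Hence it suffices to establish $(\star)$.

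To prove $(\star)$, I will replay the sequential price-adjustment argument from the proof of Lemma~\ref{lem:main-technique}, but truncate the procedure so that its trajectory begins at $\bbq$ and terminates at $\tilde{\bbq}$ instead of continuing all the way to $\bbp$. Concretely, I introduce the following truncated operations. \emph{Modified Operation~1:} if the largest ratio $r := \max_j p_j/p'_j$ exceeds $1/\epsilon$, set $\mathcal{S} := \argmax_j p_j/p'_j$ and $r' := \max\{\max_{j \notin \mathcal{S}} p_j/p'_j,\, 1/\epsilon\}$, then raise $p'_j$ to $p''_j := p_j/r'$ for $j \in \mathcal{S}$, leaving other coordinates unchanged. \emph{Modified Operation~2:} if the smallest ratio is below $\epsilon$, symmetrically lower the prices of the minimizing goods so that their common ratio reaches $r' := \min\{\min_{j \notin \mathcal{S}} p_j/p'_j,\, \epsilon\}$. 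Alternating these operations terminates in finitely many steps (each step strictly shrinks the set of distinct out-of-range ratios) at a price vector whose ratios to $\bbp$ all lie in $[\epsilon, 1/\epsilon]$; because coordinates already in this range are never extremal, they are never touched, and so the final vector is coordinate-wise equal to $\tilde{\bbq}$ by the definition of $\tilde{\bbq}$.

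Each step of this modified procedure still satisfies the per-step bound of Lemma~\ref{lem:min-side}, namely
\[
\sum_j p_j\, \demandnb_{ij}(\bbp, e_i)\log (p''_j/p'_j) \leq \sum_j p_j \left[\demandnb_{ij}(\bbp', e_i) - \demandnb_{ij}(\bbp'', e_i)\right].
\]
Inspecting that proof, the only use made of the constraint $r' \geq 1$ in the Operation~1 case is in Claim~\ref{clm:spending} (to invoke the normal-goods property when comparing budgets $e_i/r'$ and $e_i$) and in the bound $p_j \leq r' p''_j$ for $j \notin \mathcal{S}$; both remain valid because $r' \geq 1/\epsilon \geq 1$ in Modified Operation~1. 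The analogous constraint $r' \leq 1$ for Operation~2 also holds because $r' \leq \epsilon \leq 1$. Telescoping the per-step bound along the finite sequence from $\bbq$ to $\tilde{\bbq}$ then yields $(\star)$ directly.

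The step deserving the most care is the verification that the truncated procedure terminates precisely at $\tilde{\bbq}$ rather than overshooting on some coordinate: this holds because the modified operations cap each extremal-ratio adjustment at the threshold $1/\epsilon$ (respectively $\epsilon$) and only ever touch coordinates currently attaining the extremum. Beyond this targeted bookkeeping, no new technical ingredient is needed; the proof is a faithful adaptation of the existing sequential-adjustment argument.
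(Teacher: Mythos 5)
Your proposal is correct and follows essentially the same route as the paper's (one-sentence) proof: rearrange the claimed inequality into the telescoping form $(\star)$, run the Operation~1/Operation~2 price-adjustment procedure from $\bbq$ to $\tilde{\bbq}$, and sum the per-step bounds of Lemma~\ref{lem:min-side}. Your explicit truncation of the operations at the thresholds $1/\epsilon$ and $\epsilon$, together with the check that the per-step proof only needs $r'\geq 1$ (resp.\ $r'\leq 1$) and that untouched coordinates land exactly on $\tilde{\bbq}$, is a careful and faithful formalization of what the paper leaves implicit.
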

The corollary follows by adjusting from $\bbq$ to $\tilde{\bbq}$, adding terms $\sum_j p_j \demandnb_{ij}(\bbp, e_i) \log \frac{q''_j}{q'_j}  - \sum_j p_j(\demandnb_{ij}(\bbq', e_i) - \demandnb_{ij}(\bbq'', e_i)) \leq 0$ to the right-hand side. Telescoping these expressions yields the left-hand side.
\subsection{Proof of Theorem~\ref{thm:main-2}} \label{sec:proof-convergence}
In this section, we aim to demonstrate that the allocation $\bbx^t$ converges to $\bbx^*$. To achieve this, we establish several technical lemmas.

The main technical challenge is to deal with the case when $x^*_{ij} = 0$ for some buyer $i$ and good $j$, as the corresponding price may not always be well-defined in this case. To address this, we require the following technical lemmas, which are later used in the main proof. Our first lemma shows that, given an allocation, there exists a threshold such that for any price vector resulting in this allocation, the price vector must be no less than this threshold.
\begin{lemma}\label{lem:tech-2}
    Given an allocation $\bbx_i = \demand_i(\bbp, e_i)$ and $\bbp > 0$, there exists a threshold $\epsilon > 0$ such that $\bbp' \geq \epsilon$ for any $\bbp'$ satisfying $\bbx_i = \demand_i(\bbp', e_i)$.
\end{lemma}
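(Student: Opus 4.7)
The plan is to partition the coordinate set $[m]$ into $J_+ := \{j : x_{ij} > 0\}$ and $J_0 := \{j : x_{ij} = 0\}$, establish a positive lower bound on $p'_j$ on each block separately (uniform over all admissible $\bbp'$), and then set $\epsilon$ to the smaller of the two. The set $J_+$ is nonempty: Observation~\ref{obs:full-spending} together with $e_i > 0$ forces the buyer to purchase a positive amount of at least one good. Also note that $\bbp' > 0$ is forced in the hypothesis, since $p'_j = 0$ would conflict with strict monotonicity of $u_i$ and the optimality $\bbx_i = \demand_i(\bbp', e_i)$.

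On $J_+$, Lemma~\ref{lem:tech-1} applied to the pair $(\bbp, \bbp')$ yields $p'_j = p_j$ directly. These coordinates are therefore bounded below by $\min_{j\in J_+} p_j > 0$, a quantity depending only on the data $(\bbx_i, \bbp)$ and not on the particular alternative price vector $\bbp'$.

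On $J_0$, I would exploit local optimality of $\bbx_i$ at $\bbp'$ via a budget-neutral swap. Pick any $k \in J_+$; for small $\eta > 0$ the vector $\bbx_i - \eta e_k + (\eta p'_k/p'_j)e_j$ is nonnegative and budget-feasible at $\bbp'$, so $u_i\bigl(\bbx_i - \eta e_k + (\eta p'_k/p'_j) e_j\bigr) \le u_i(\bbx_i)$. Dividing by $\eta$ and letting $\eta \to 0^+$ gives the first-order inequality
\[
\frac{p'_k}{p'_j}\,\nabla^+_j u_i(\bbx_i) \;\le\; \nabla_k u_i(\bbx_i),
\]
where $\nabla_k u_i(\bbx_i)$ is the ordinary partial derivative (well-defined because $x_{ik} > 0$) and $\nabla^+_j u_i(\bbx_i)$ is the right partial derivative at $x_{ij} = 0$. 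Using $p'_k = p_k$ from the previous step, this rearranges to
\[
p'_j \;\ge\; p_k \cdot \frac{\nabla^+_j u_i(\bbx_i)}{\nabla_k u_i(\bbx_i)},
\]
a positive number depending only on $(\bbx_i, \bbp, u_i)$. Taking the minimum over $j \in J_0$ yields the second lower bound, and $\epsilon$ is defined as the smaller of the two.

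The main obstacle is the a priori possibility that $\nabla^+_j u_i(\bbx_i) = +\infty$ for some $j \in J_0$ — CES utilities with $\rho < 1$ exhibit exactly this boundary behaviour, and the paper itself flags this phenomenon when discussing the ill-definedness of corresponding prices at non-strictly-positive bundles. The hypothesis $\bbx_i = \demand_i(\bbp, e_i)$ at the finite price $\bbp$ rescues us: applying the same swap argument at the original price $\bbp$ gives $p_j \ge p_k \nabla^+_j u_i(\bbx_i)/\nabla_k u_i(\bbx_i)$, which forces $\nabla^+_j u_i(\bbx_i) < +\infty$ on $J_0$. Strict monotonicity of $u_i$ further forces $\nabla^+_j u_i(\bbx_i) > 0$, so the lower bound on $J_0$ is both finite and strictly positive, completing the argument.
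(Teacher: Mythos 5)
Your proof is correct, but it takes a genuinely different route from the paper's. The paper's argument is derivative-free: for each good $k$ it builds a competing bundle $\bba_k + (1-\beta)\bbx_i$ (one extra unit of good $k$, a small uniform shave of the original bundle) that strictly dominates $\bbx_i$ by strict monotonicity and continuity, and observes that if $p'_k < \beta e_i/2$ this bundle becomes affordable at $\bbp'$ because the shave frees up $\beta e_i$ of budget --- contradicting optimality. This yields one uniform threshold $\epsilon=\beta e_i/2$ for all coordinates simultaneously, with no case split and no appeal to Lemma~\ref{lem:tech-1} or to differentiability. Your argument instead splits into $J_+$ (where Lemma~\ref{lem:tech-1} pins $p'_j=p_j$ exactly, which is a stronger conclusion than the paper extracts) and $J_0$ (where a budget-neutral swap gives an explicit marginal-rate-of-substitution bound $p'_j \ge p_k\,\nabla^+_j u_i(\bbx_i)/\nabla_k u_i(\bbx_i)$). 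The price you pay is a reliance on first-order calculus at a boundary point: passing from $u_i(\bbx_i+\eta d)\le u_i(\bbx_i)$ to the componentwise inequality needs either differentiability of $u_i$ at $\bbx_i$ or the superadditivity of directional derivatives of concave functions (and, strictly, the one-sided derivative $\nabla^-_k$ in the $-e_k$ direction), none of which is guaranteed by Assumption~\ref{aspt:u-std} alone, though it is consistent with the paper's implicit use of $\nabla u_i$ throughout. Your handling of the possible $\nabla^+_j u_i(\bbx_i)=+\infty$ degeneracy --- ruling it out by running the same swap at the finite price $\bbp$ --- is the right fix and is exactly the boundary pathology the paper's elementary construction sidesteps entirely.
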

\begin{proof}
    We construct a sequence of allocations $\{\bbx^{(k)}_i\}_{k = 1}^m$ such that $u_i(\bbx^{(k)}_i) > u_i(\bbx_i)$ for all $k$. These allocations take the form:
    \begin{align*}\bbx^{(k)}_i = \alpha \bba_k + (1 - \beta)\bbx_i, \end{align*} where $\alpha> 0$, $\beta > 0$ and $\bba_k$ is the allocation vector that assigns a unit of the $k$-th good and zero of all other goods.
     We first construct $\alpha_k$ and $\beta_k$ such that $u(\alpha_k a_k + (1 - \beta_k) x_i) > u(x_i)$. Since utility is strictly increasing, we have $u(a_k + x_i) > u(x_i)$. By reducing $x_i$ slightly, choosing a small enough $\beta_k$, we ensure $u(a_k + (1- \beta_k) x_i) > u(x_i)$. Setting $\alpha_k=1$, we have $u(\alpha_k a_k + (1- \beta_k) x_i) > u(x_i)$. Define $\alpha = \alpha_k = 1$ and $\beta = \min_k {\beta_k}$ and set $x^{(k)}_i = \alpha a_k + (1- \beta) x_i$. Since $x^{(k)}_i \ge \alpha_k a_k + (1- \beta_k) x_i$, it follows that $u(x^{(k)}_i) > u(x_i)$.

    Next, let $\epsilon = \beta e_i / (2\alpha)$. If $p'_k < \epsilon$, then: 
    \begin{align*}
        (\bbp')^\top \bbx^{(k)}_i &= \alpha (p'_k)  + (1 - \beta) (\bbp')^\top \bbx_i \\
        & \leq  \epsilon \alpha   + (1 - \beta) e_i \\
        &< e_i.
    \end{align*}
    This shows that $\bbx^{(k)}_i$ is budget feasible. However, since $u_i(\bbx^{(k)}_i) > u_i(\bbx^*_i)$ for all $k$, this contradicts the assumption that $\bbx_i = \demand_i(\bbp', e_i)$.
\end{proof}

Suppose two price vectors, $\bbp$ and $\bbp'$, result in the same allocation $\bbx_i$. It is possible, for some good $j$ with $x_{ij} = 0$, $p'_j \gg p_j$. The following lemma demonstrates that for any $\epsilon < 1$, we can construct a new price vector $\tilde{\bbp}  = \min\{\bbp/ \epsilon, \bbp'\}$, which reduces the gap between $\bbp, \bbp'$, while still resulting in the same allocation $\bbx_i$. 
\begin{lemma}\label{lem:tech-3}
    Given an allocation $\bbx_i$ and price vector $\bbp > 0$ and $\bbp' > 0$ such that $\bbx_i = \demand_i(\bbp, e_i) = \demand_i(\bbp', e_i)$. For any $\epsilon < 1$, let $\tilde{\bbp}  = \min\{\bbp/ \epsilon, \bbp'\}$. Then, $\demand_i(\tilde{\bbp}, e_i) = \bbx_i$. 
\end{lemma}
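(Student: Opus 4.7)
The plan is to show that $\bbx_i$ satisfies the Karush--Kuhn--Tucker conditions at the price vector $\tilde{\bbp}$; by the sufficiency of KKT for concave maximization with linear constraints (Slater's condition holds since $\tilde{\bbp} > 0$), and by strict concavity of $u_i$ ensuring the demand is a singleton, this will imply $\demand_i(\tilde{\bbp}, e_i) = \bbx_i$.

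First I would apply Lemma~\ref{lem:tech-1} to conclude $p_j = p'_j$ for every $j$ in the support $A = \{j : x_{ij} > 0\}$. For such $j$, since $\epsilon < 1$ gives $p_j/\epsilon > p_j = p'_j$, we obtain $\tilde{p}_j = \min(p_j/\epsilon, p'_j) = p_j$. Consequently the budget is fully spent at $\tilde{\bbp}$: by Observation~\ref{obs:full-spending} applied at $\bbp$,
\[
\tilde{\bbp}^\top \bbx_i = \sum_{j \in A} p_j x_{ij} = \bbp^\top \bbx_i = e_i.
\]

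Next, from the KKT conditions for $\bbx_i = \demand_i(\bbp, e_i)$ and $\bbx_i = \demand_i(\bbp', e_i)$ I extract Lagrange multipliers $\lambda, \lambda' > 0$ obeying $\nabla_j u_i(\bbx_i) = \lambda p_j$ for $j \in A$ (resp. $\lambda' p'_j$), and $\nabla_j u_i(\bbx_i) \le \lambda p_j$ for $j \notin A$ (resp. $\lambda' p'_j$). Since $e_i > 0$ and $\bbp > 0$, the set $A$ is nonempty; picking any $j^\star \in A$ and using $p_{j^\star} = p'_{j^\star} > 0$ from the previous step forces $\lambda = \lambda'$. This matching of multipliers, hinging entirely on Lemma~\ref{lem:tech-1}, is the main subtlety of the argument.

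Finally I verify KKT at $\tilde{\bbp}$ with the common multiplier $\lambda$. For $j \in A$, $\nabla_j u_i(\bbx_i) = \lambda p_j = \lambda \tilde{p}_j$. For $j \notin A$, both inequalities $\nabla_j u_i(\bbx_i) \le \lambda p_j$ and $\nabla_j u_i(\bbx_i) \le \lambda p'_j$ hold, so
\[
\nabla_j u_i(\bbx_i) \;\le\; \lambda \min(p_j, p'_j) \;\le\; \lambda \min(p_j/\epsilon, p'_j) \;=\; \lambda \tilde{p}_j,
\]
where the middle inequality uses $\epsilon < 1$. This is precisely the reason for defining $\tilde{\bbp} = \min(\bbp/\epsilon, \bbp')$ rather than $\min(\bbp, \bbp')$: scaling by $1/\epsilon$ preserves the KKT slack on goods not in the support. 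Together with budget tightness, this completes the KKT verification at $\tilde{\bbp}$, and the conclusion $\demand_i(\tilde{\bbp}, e_i) = \bbx_i$ follows.
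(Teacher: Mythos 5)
Your proof is correct, but it takes a genuinely different route from the paper's. The paper argues by contradiction via a revealed-preference exchange: assuming $\demand_i(\tilde{\bbp},e_i)\neq\bbx_i$, it notes that $\bbx_i$ is still affordable at $\tilde{\bbp}\le\bbp'$, so the demand at $\tilde{\bbp}$ would be strictly better; it then uses the GS property to show that every good on which $\bbp$ and $\bbp'$ disagree receives zero demand at $\tilde{\bbp}$, whence the demand at $\tilde{\bbp}$ is also affordable at $\bbp$, contradicting optimality of $\bbx_i$ there. You instead verify the KKT conditions at $\tilde{\bbp}$ directly: Lemma~\ref{lem:tech-1} gives $\tilde{p}_j=p_j$ on the support, a common support good pins $\lambda=\lambda'$, and off the support $\nabla_j u_i(\bbx_i)\le\lambda\min(p_j,p'_j)\le\lambda\tilde{p}_j$. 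What your argument buys is that it does not invoke gross substitutes at all --- only concavity, monotonicity, and Lemma~\ref{lem:tech-1} --- so it is in that sense more general. What it costs is a reliance on well-defined (one-sided) partial derivatives of $u_i$ at $\bbx_i$, including at coordinates with $x_{ij}=0$, which is exactly the delicate boundary regime this lemma is designed for (the paper flags that, e.g., CES marginal utilities blow up at zero). Your argument still goes through because the KKT conditions at $\bbp$ already force those one-sided derivatives to be finite whenever $x_{ij}=0$ is optimal, but you should say this explicitly; the paper's derivative-free argument sidesteps the issue entirely. (Also, a small inaccuracy: Slater's condition is needed for the necessity of KKT, not for the sufficiency you invoke; and $\lambda=\lambda'$ requires noting $\nabla_{j^\star}u_i(\bbx_i)>0$, which follows from strict monotonicity.)
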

\begin{proof}
    By Lemma~\ref{lem:tech-1}, $p'_j \neq p_j$ is only possible for goods $j$ where $x_{ij} = 0$. Now, assume that $\demand_i(\tilde{\bbp}, e_i) \neq \bbx_i$. By the GS property, by adjusting price from $\bbp'$ to $\tilde{\bbp}$, we have
    \begin{enumerate}[(1)]
    \item  for good $j$ where $p_j < p_j / \epsilon = \tilde{p}_j < p'_j$, $x_{ij} = 0$, and hence $\demandnb_{ij}(\tilde{\bbp}, e_i) \geq x_{ij} = 0$;
    \item  for good $j$ where $p_j \neq p'_j \leq p_j / \epsilon$, $\demandnb_{ij}(\tilde{\bbp}, e_i) = 0 $ (by GS property);
    \item  for all other goods, $\demandnb_{ij}(\tilde{\bbp}, e_i) \leq x_{ij}$ (by GS property).
    \end{enumerate}
    Observe that $\bbx_i$ is a budget-feasible bundle under price $ \tilde{\bbp}$, since $\bbx_i^\top \tilde{\bbp} \leq \bbx_i^\top \bbp' = e_i$.
    %but $\demand_i(\tilde{\bbp}, e_i) \neq \bbx_i$, 
    By the definition of demand, $\demand_i(\tilde{\bbp}, e_i)$ is a strictly better bundle than $\bbx_i$. 
    On the other hand, $\demand_i(\tilde{\bbp}, e_i)$ is a budget-feasible bundle under price $\bbp$, since $\demand_i(\tilde{\bbp}, e_i)^\top \bbp \leq \demand_i(\tilde{\bbp}, e_i)^\top \tilde{\bbp} = e_i$ according to (1)-(3), which contradicts the assumption that $\demand_i(\bbp, e_i) = \bbx_i$. 
\end{proof}

The next technical lemma shows that if the allocation converges to a particular allocation, the corresponding spending also converges to the spending at that allocation.
\begin{lemma} \label{lem:converge-allocation-1}
    Let $\{\bbx_i^s > 0\}_s$ be a sequence of allocations and let $\{\bbp^s\}_s$ be the corresponding prices, $\bbx_i^s = \demand_i(\bbp^s, e_i)$. If $\bbx_i^s$ converges to $\bbx_i$ when $s \rightarrow \infty$ and $\bbx_i = \demand_i(\bbp, e_i)$, then $p_j^s x^s_{ij}$ converges to $p_j x_{ij}$. This also implies $p_j^s$ converges to $p_j$ if $x_{ij} > 0$.
\end{lemma}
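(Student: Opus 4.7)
}
The main difficulty is that for goods $j$ with $x_{ij}=0$, the price $p_j^s$ may blow up as $\bbx_i^s\to\bbx_i$, so the corresponding-price formula \eqref{eqn:corres-price} cannot be taken to the limit by continuity alone. My plan is to work with the products $y_j^s:=p_j^s x_{ij}^s$ directly, which all lie in $[0,e_i]$ by Observation~\ref{obs:full-spending}. By compactness, it suffices to show that along every subsequence with $y_j^s\to c_j$ for each $j$, the limit satisfies $c_j=p_j x_{ij}$. Partition the goods into $J^+=\{j:x_{ij}>0\}$ and $J^0=\{j:x_{ij}=0\}$. Along such a subsequence, for $j\in J^+$ we get $p_j^s=y_j^s/x_{ij}^s\to c_j/x_{ij}=:\hat p_j<\infty$; for $j\in J^0$ the price $p_j^s$ may remain bounded or diverge.

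The key step is to show $c_j=0$ for $j\in J^0$. Applying Lemma~\ref{lem:main-technique} with $\bbp^s$ in the role of the lemma's $\bbp$ and $\bbp$ in the role of its $\bbq$ gives
\[
\sum_j p_j^s x_{ij}^s \log\frac{p_j^s}{p_j} \;\le\; \sum_j p_j^s(x_{ij}-x_{ij}^s) \;=\; \sum_{j\in J^+} p_j^s x_{ij} - e_i,
\]
using $x_{ij}=0$ on $J^0$ together with full-budget spending. The right-hand side is bounded since $p_j^s$ is bounded on $J^+$. Moreover, Lemma~\ref{lem:tech-2} applied with a single $\beta$ that works uniformly for all $\bbx_i^s$ near $\bbx_i$ (by continuity of $u_i$ the strict inequality in the construction of $\beta$ persists) yields a uniform $\epsilon>0$ with $p_j^s\ge\epsilon$ for all large $s$, so every summand on the left is bounded below. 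For $j\in J^0$ with $p_j^s$ bounded, clearly $y_j^s\to 0$. For $j\in J^0$ with $p_j^s\to\infty$, the summand $y_j^s\log(p_j^s/p_j)$ is eventually non-negative, and it must remain bounded --- otherwise the left-hand side would diverge to $+\infty$ while all other summands stay bounded below, contradicting the bounded right-hand side; since $\log(p_j^s/p_j)\to+\infty$, boundedness of the summand forces $y_j^s\to 0$.

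Finally I identify $\hat p_j=p_j$ for $j\in J^+$. Passing the budget identity to the limit and using $c_j=0$ on $J^0$ gives $\sum_{j\in J^+}\hat p_j x_{ij}=e_i=\sum_{j\in J^+} p_j x_{ij}$. The first-order condition at the interior optimum $\bbx_i^s>0$ yields $p_j^s/p_{j'}^s=\nabla_j u_i(\bbx_i^s)/\nabla_{j'} u_i(\bbx_i^s)$ for $j,j'\in J^+$; letting $s\to\infty$ and using continuity of $\nabla u_i$ along the strictly positive coordinates at $\bbx_i$ gives $\hat p_j/\hat p_{j'}=\nabla_j u_i(\bbx_i)/\nabla_{j'} u_i(\bbx_i)=p_j/p_{j'}$, the last equality by the analogous first-order condition at $\bbp$. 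Hence $\hat\bbp|_{J^+}=c\cdot\bbp|_{J^+}$ for a single constant $c>0$, and the two budget identities force $c=1$, so $\hat p_j=p_j$. Since every convergent subsequence of $\{y_j^s\}$ has the same limit $p_j x_{ij}$, the whole sequence converges to $p_j x_{ij}$, and for $j\in J^+$ the convergence $p_j^s\to p_j$ is immediate. I expect the main obstacle to be precisely the $p_j^s\to\infty$ case on $J^0$, which is handled via the Lemma~\ref{lem:main-technique} bound combined with the uniform lower bound from Lemma~\ref{lem:tech-2}.
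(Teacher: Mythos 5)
Your route is genuinely different from the paper's. The paper argues by contradiction on a convergent subsequence of the spending vectors and derives the contradiction purely from the GS property, by projecting $\bbp^{s_k}$ into $[(1-\epsilon)\bbp,(1+\epsilon)\bbp]$ and comparing the buyer's spending on $\mathcal{N}^+\cup\mathcal{Z}$ under the original and projected prices. You instead feed the pair $(\bbp^s,\bbp)$ into Lemma~\ref{lem:main-technique} and play the bounded right-hand side against the potentially divergent logarithms on the left. The core of your argument is correct and arguably cleaner: the uniform lower bound $p_j^s\ge\epsilon$ does follow from a (routine, but worth stating) uniform version of Lemma~\ref{lem:tech-2} --- although you do not actually need it there, since $p_j^s x_{ij}^s\log(p_j^s/p_j)=p_j x_{ij}^s\cdot t\log t$ with $t=p_j^s/p_j$ is automatically bounded below by $-p_j/e$ --- and your squeeze argument forcing $y_j^s\to0$ on $J^0$ is sound.

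The gap is in the last step, where you identify $\hat p_j=p_j$ on $J^+$ via first-order conditions. This presumes that $\nabla_j u_i$ exists at $\bbx_i$ and that $\nabla_j u_i(\bbx_i^s)\to\nabla_j u_i(\bbx_i)$ for $j\in J^+$. When $\tilde{\bbx}_i$ has zero coordinates, $\bbx_i$ is a boundary point of the domain; Assumptions~\ref{aspt:u-std}--\ref{def:AGS} only give differentiability on the open orthant, and the paper explicitly warns (after Observation~\ref{obs:personal-demand-price}) that gradients need not behave continuously as allocations approach the boundary. A limit of $\nabla u_i(\bbx_i^s)$, when it exists, is only guaranteed to be \emph{some} supergradient of $u_i$ at $\bbx_i$, not necessarily the one certifying $\bbx_i=\demand_i(\bbp,e_i)$ in the KKT conditions, so the identity $\hat p_j/\hat p_{j'}=p_j/p_{j'}$ does not follow at this level of generality. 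The good news is that the repair stays entirely inside your framework: pass to the limit in your displayed inequality. The $J^0$ summands have nonnegative liminf (they vanish when $p_j^s$ stays bounded and are eventually nonnegative when $p_j^s\to\infty$), the right-hand side tends to $\sum_{j\in J^+}\hat p_j x_{ij}-e_i=0$, so $\sum_{j\in J^+}\hat p_j x_{ij}\log(\hat p_j/p_j)\le 0$; since $\sum_{j\in J^+}\hat p_j x_{ij}=\sum_{j\in J^+}p_j x_{ij}=e_i$, the log-sum (Gibbs) inequality forces $\hat p_j=p_j$ for all $j\in J^+$. With that substitution your proof is complete and independent of the paper's projection argument.
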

\begin{proof}
    We prove this result by contradiction. Suppose that $p_j^s x^s_{ij}$ does not converge to $p_j x_{ij}$. Note that $\sum_j p_j^s x_{ij}^s = e_i$. There must exist a subsequence $s_k$ such that $\{p_j^{s_k} x_{ij}^{s_k}\}_j$ converges to some limit, $\bbb_{i}^+$, such that $\bbb_{i}^+ \neq \bbb_i $ where $b_{ij} \triangleq p_j x_{ij}$.

    We divide all goods into three sets: $\mathcal{N}^+ \triangleq \{j ~|~ b_{ij}^+ > b_{ij}\}$, $\mathcal{N}^- \triangleq \{j ~|~  b_{ij}^+ < b_{ij}\}$, and $\mathcal{N}^= \triangleq \{j ~|~ b_{ij}^+ = b_{ij}\}$. 
    Since $\bbb_{i}^+ \neq \bbb_i $, $\mathcal{N}^+ \neq \emptyset$ and $\mathcal{N}^- \neq \emptyset$. Additionally, define $\mathcal{Z} \triangleq \{j ~|~ x_{ij} = 0\}$. 

    For a small enough $\epsilon > 0$ such that $b_{ij}^+ > (1 + 2\epsilon) b_{ij}$ for all $j \in \mathcal{N}^+$, there exists a $k_0$, such that for any $k \geq k_0$, 
    \begin{itemize}
        \item $p_j^{s_k} > (1 + \epsilon) p_j$ for all $j \in \mathcal{N}^+$  (as $ b_{ij}^+ > (1 + 2\epsilon) b_{ij}$ and $x_{ij}^{s_k} \rightarrow x_{ij}$);
        \item $p_j^{s_k} > (1 + \epsilon) p_j$ implies $j \in \mathcal{N}^+ \cup \mathcal{Z}$ (as $p_j^{s_k} x_{ij}^{s_k} \rightarrow b_{ij}^+ \leq b_{ij} = p_j x_{ij}$ for $j \in \mathcal{N}^- \cup \mathcal{N}^=$ and $x_{ij}^{s_k} \rightarrow x_{ij}$);
        \item $\{j~|~j \in \mathcal{Z} \text{ and } p^{s_k}_j \leq (1 + \epsilon) p_j \} \subseteq \mathcal{Z} \cap \mathcal{N}^=$ (as $\mathcal{Z} \cap \mathcal{N}^- = \emptyset$ by definition and $p_j^{s_k} > (1 + \epsilon) p_j$ for all $j \in \mathcal{N}^+$).
    \end{itemize}
    Therefore, Given a small $\epsilon > 0$, consider projecting the prices $p_j^{s_k}$ to lie within the range $[(1 - \epsilon) p_j, (1+ \epsilon) p_j]$, i.e., define $\tilde{p}_j^{s_k} \triangleq \max\left\{(1 - \epsilon) p_j, \min\{(1+ \epsilon) p_j, p_j^{s_k}\}\right\}$.
    Thus, 
    \begin{align*}
        \sum_{j \in \mathcal{N}^+ \cup \mathcal{Z}} \tilde{p}_j^{s_k} \demandnb_{ij}(\tilde{\bbp}^{s_k}, e_i) &\geq \sum_{j \in \mathcal{N}^+ \cup \mathcal{Z} \setminus \{j|~j \in \mathcal{Z} \text{ and } p^{s_k}_j \leq (1 + \epsilon) p_j^* \}} \tilde{p}_j^{s_k} \demandnb_{ij}(\tilde{\bbp}^{s_k}, e_i) \\
        &\geq \sum_{j \in \mathcal{N}^+ \cup \mathcal{Z} \setminus \{j|~j \in \mathcal{Z} \text{ and } p^{s_k}_j \leq (1 + \epsilon) p_j^* \}} p_j^{s_k} \demandnb_{ij}(\bbp^{s_k}, e_i) \\
        &\geq \sum_{j \in \mathcal{N}^+ \cup \mathcal{Z} \setminus \mathcal{N}^=} p_j^{s_k} \demandnb_{ij}(\bbp^{s_k}, e_i) \\
        &\geq \sum_{j \in \mathcal{N}^+ } p_j^{s_k} \demandnb_{ij}(\bbp^{s_k}, e_i).
    \end{align*}
    For the second inequality, we consider the price adjustments from $\bbp^{s_k}$ to $\tilde{\bbp}^{s_k}$. We reduce the prices from $p_j^{s_k}$ to $\tilde{p}_j^{s_k}$ only for $j \in \mathcal{N}^+ \cup \mathcal{Z} \setminus \{j|~j \in \mathcal{Z} \text{ and } p^{s_k}_j \leq (1 + \epsilon) p_j^* \}$ and we may increase other goods' prices. Therefore, by the GS property, the price adjustments will increase the total spending on goods in $\mathcal{N}^+ \cup \mathcal{Z} \setminus \{j|~j \in \mathcal{Z} \text{ and } p^{s_k}_j \leq (1 + \epsilon) p_j^* \}$.
    
    Letting $\epsilon \rightarrow 0$ and $k \rightarrow \infty$ accordingly, the LHS converges to $\sum_{j \in \mathcal{N}^+ \cup \mathcal{Z}} p_j \demandnb_{ij}(\bbp, e_i) = \sum_{j \in \mathcal{N}^+ \cup \mathcal{Z}} b_{ij} = \sum_{j \in \mathcal{N}^+} b_{ij}$, while the RHS converges to $\sum_{j \in \mathcal{N}^+} b_{ij}^+$. This leads to a contradiction as $\mathcal{N}^+ = \{j ~|~ b_{ij}^+ > b_{ij}^*\} \neq \emptyset$.
\end{proof}

Now, we prove Theorem~\ref{thm:main-2}.

\begin{proof}[Proof of Theorem~\ref{thm:main-2}]
Assume for contradiction that the sequence of allocations $\bbx^t$ doesn't converge to $\bbx^*$. This implies there exists an agent $i$ such that $\bbx^t_i$ doesn't converge to $\bbx^*_i$. Since $\bbx^t_i \in [0, 1]^m$, there must exists a subsequence $\bbx^{t_k}_i$ that converges to some $\tilde{\bbx}_i \neq \bbx^*$. We analyze two cases based on the properties of $\tilde{\bbx}_i$.

\paragraph{Case 1: $\tilde{\bbx}_i > 0$} Using Lemma~\ref{lem:tech-1} and \ref{lem:main-technique}, there exists a $\delta > 0$ such that 
\begin{align*}
    \sum_j p^*_j \demandnb_{ij}(\bbp^*, e_i) \log \frac{p^*_j}{q_{ij}(\tilde{\bbx}_i)}  - \sum_j p_j(\tilde{x}_{ij} - \demandnb_{ij}(\bbp^*, e_i)) \leq -\delta, \numberthis \label{eq:main-2-1}
\end{align*}
where $\bbq_i(\tilde{\bbx}_i)$ is the corresponding price vector of $\tilde{\bbx}_i$. Thus, for sufficiently large $t$,  this implies: 
\begin{align*}\sum_{ij} b_{ij}^*  \log \frac{p_j^*}{e_i \frac{ \nabla_j u_i(\bbx_i^{t_k})}{\sum_{j'} x_{ij'}^t \nabla_{j'} u_i(\bbx_i^{t_k})}}  &\leq \sum_j p^*_j \demandnb_{ij}(\bbp^*, e_i) \log \frac{p^*_j}{q_{ij}(\bbx_i^{t_k})}  - \sum_j p^*_j(x^{t_k}_{ij} - \demandnb_{ij}(\bbp^*, e_i)) \\
&\rightarrow \sum_j p^*_j \demandnb_{ij}(\bbp^*, e_i) \log \frac{p^*_j}{q_{ij}(\tilde{\bbx}_i)}  - \sum_j p^*_j(\tilde{x}_{ij} - \demandnb_{ij}(\bbp^*, e_i)) \\
&\overset{(a)}{\leq} -\delta.
\end{align*} 
(a) is by \eqref{eq:main-2-1} and the limit holds due to Lemma~\ref{lem:converge-allocation-1}.  This leads to a contradiction because the LHS of equation \eqref{eq:3} is always positive. Thus, this case is invalid.

\paragraph{Case 2: $\tilde{\bbx}_i$ is not strictly positive.}In this case, we consider the corresponding price vectors $\bbq_i(\bbx_i^{t_k})$ and project them into the bounded domain $[\epsilon \bbp^*, \frac{1}{\epsilon} \bbp^*]$,  resulting in $\tilde{\bbq}_i^t$:
\begin{align*}
\tilde{q}_{ij}^t = \begin{cases}
    \epsilon p^*_j &\text{ if } q_{ij}(\bbx_i^t) <  \epsilon p^*_j\\
    \frac{1}{\epsilon}  p^*_j &\text{ if } q_{ij}(\bbx_i^t) >  \frac{1}{\epsilon} p^*_j\\
    q_{ij}(\bbx_i^t) &\text{ o.w.}
\end{cases}
\end{align*}  $\epsilon$ will be specified later. 
    
    By Corollary~\ref{cor:main-technique}, 
    \begin{align*}&\sum_j p^*_j \demandnb_{ij}(\bbp^*, e_i) \log \frac{p^*_j}{q_{ij}(\bbx_i^t)}  - \sum_j p^*_j\left[\demandnb_{ij}(\bbq_i(\bbx_i^t), e_i) - \demandnb_{ij}(\bbp^*, e_i)\right] \\
    &\hspace{1in}\leq \sum_j p^*_j \demandnb_{ij}(\bbp^*, e_i) \log \frac{p^*_j}{\tilde{q}^t_{ij}}  - \sum_j p^*_j\left[\demandnb_{ij}(\tilde{\bbq}^t_i, e_i) - \demandnb_{ij}(\bbp^*, e_i)\right].  \numberthis \label{eq::gs-converg-1}
    \end{align*}
    Since  $[\epsilon \bbp^*, \frac{1}{\epsilon} \bbp^*]$ is bounded, there exists a further subsequence of subsequence $\{\bbx_i^{t_k}\} \rightarrow \tilde{\bbx}_i$ such that $\tilde{\bbq}^{t_k}_i \rightarrow \tilde{\bbq}^+_i$. Note that
    
    \begin{align*}\sum_{ij} b_{ij}^*  \log \frac{p_j^*}{e_i \frac{ \nabla_j u_i(\bbx_i^{t_k})}{\sum_{j'} x_{ij'}^{t_k} \nabla_{j'} u_i(\bbx_i^{t_k})}}  &\leq \sum_j p^*_j \demandnb_{ij}(\bbp^*, e_i) \log \frac{p^*_j}{q_{ij}(\bbx_i^{t_k})}  - \sum_j p^*_j(x^{t_k}_{ij} - \demandnb_{ij}(\bbp, e_i)) \\
    & \leq \sum_j p^*_j \demandnb_{ij}(\bbp^*, e_i) \log \frac{p^*_j}{\tilde{q}^{t_k}_{ij}}  - \sum_j p^*_j(\demandnb_{ij}(\tilde{\bbq}^{t_k}_i, e_i) - \demandnb_{ij}(\bbp^*, e_i))\quad \text{(by \eqref{eq::gs-converg-1})} \\
    &\rightarrow \sum_j p^*_j \demandnb_{ij}(\bbp^*, e_i) \log \frac{p^*_j}{\tilde{q}^{+}_{ij}}  - \sum_j p^*_j(\demandnb_{ij}(\tilde{\bbq}^+_i, e_i) - \demandnb_{ij}(\bbp^*, e_i)).
    \end{align*}
    If one can show $\demand_i(\tilde{\bbq}^+_i, e_i) \neq \bbx_i^*$, the RHS is no larger than $-\delta$ for some positive $\delta$ (by Lemma~\ref{lem:main-technique}), which contradicts the positivity of the LHS of equation \eqref{eq:3}.

    Assume for a contradiction that $\demand_i(\tilde{\bbq}^+_i, e_i) = \bbx_i^*$ but $\tilde{\bbx}_i \neq \bbx_i^*$ for small enough $\epsilon$. We choose $\epsilon$ small enough such that $\tilde{\bbq}^+_i > 2 \epsilon \bbp^*$ (by Lemma~\ref{lem:tech-2}). 
    In this case, for sufficiently large $t_k$, 
    \begin{enumerate}
        \item $q_{ij}(\bbx_i^{t_k}) = \tilde{q}^{t_k}_{ij} \rightarrow \tilde{q}^+_{ij} = p^*_j$ for all $j$ with $x_{ij}^* > 0$ (by Lemma~\ref{lem:tech-1});
        \item $q_{ij}(\bbx_i^{t_k}) \geq \tilde{q}^{t_k}_{ij}  \rightarrow \tilde{q}^+_{ij}$  for all $j$ (as $\tilde{\bbq}^+_i > 2 \epsilon \bbp^*$ by Lemma~\ref{lem:tech-2}).
    \end{enumerate}
    
    Thus, for sufficiently large $t_k$, for any $j$ with $x_{ij}^* > 0$, 
    $$ x_{ij}^{t_k} = \demandnb_{ij}(\bbq_i(\bbx_i^{t_k}), e_i) \geq \demandnb_{ij}(\tilde{\bbq}^{t_k}_i, e_i) \rightarrow \demandnb_{ij}(\tilde{\bbq}^{+}_i, e_i) = \bbx_{ij}^*.$$
    The inequality holds by GS property as $q_{ij}(\bbx_i^{t_k}) = \tilde{q}^{t_k}_{ij}$ for $j$ with $x_{ij}^* > 0$ and $q_{ij}(\bbx_i^{t_k}) \geq \tilde{q}^{t_k}_{ij}$ for all $j$. 
    
    Therefore,   \begin{align*}e_i &\geq \sum_{j: x_{ij}^* > 0} q_{ij}(\bbx_i^{t_k}) \demandnb_{ij}(\bbq_i(\bbx_i^{t_k}), e_i) \geq  \sum_{j: x_{ij}^* > 0} q_{ij}(\bbx_i^{t_k}) \demandnb_{ij}(\tilde{\bbq}^{t_k}_i, e_i) \\
    & \quad \quad \quad \quad= \sum_{j: x_{ij}^* > 0} \tilde{\bbq}^{t_k}_{ij} \demandnb_{ij}(\tilde{\bbq}^{t_k}_i, e_i) \overset{(a)}{\rightarrow} \sum_{j: x_{ij}^* > 0} p_j^* \bbx_{ij}^* = e_i. \numberthis \label{ineq::9}\end{align*}
    (a) is by Lemma~\ref{lem:converge-allocation-1}. This implies $\sum_{j: x_{ij}^* = 0} q_{ij}(\bbx_i^{t_k}) \demandnb_{ij}(\bbq_i(\bbx_i^{t_k}), e_i) \rightarrow 0$, which yields $\demandnb_{ij}(\bbq_i(\bbx_i^{t_k}), e_i) \rightarrow 0$ when $x_{ij}^* = 0$ as $q_{ij}(\bbx_i^{t_k}) > \epsilon p_j^*$ for sufficiently large $t_k$ and $p_j^* > 0$ by Observation~\ref{obs:pos-price}. Additionally, in \eqref{ineq::9}, as $\bbq_i(\bbx_i^{t_k}) \rightarrow p_j^*$ for $x_{ij}^* > 0$, $x_{ij}^{t_k} = \demandnb_{ij}(\bbq_i(\bbx_i^{t_k}), e_i) \rightarrow \bbx_{ij}^*$. This contradicts the assumption that $\tilde{\bbx}_i \neq \bbx_i^*$.
\end{proof}
\section{Setting of Exchange markets} \label{sec:exchange-market}
Consider an exchange market with $n$ agents, where each agent $i$ is initially endowed with goods, $\mathcal{G}_i$. Each agent has a utility function defined over all goods, and these utility functions satisfy Assumption~\ref{aspt:u-std} and Assumption~\ref{def:AGS}, as in Fisher markets.

In the exchange market, agents begin by selling their endowed goods to obtain money. They then use the money to purchase goods that maximize their utilities. Thus, given a price vector $\bbp$ over goods, the demand of agent $i$ is defined as:
$$ \demand_i(\bbp, e_i) =  \arg \max_{\bbx_i} \text{~s.t.~} \bbp^\top \bbx_i \leq e_i \text{~and~} \bbx_i \geq 0,$$
where $e_i = \sum_{j \in \mathcal{G}_i} p_j$.

\paragraph{Exchange market equilibrium}
An exchange market equilibrium characterizes the equilibrium allocation and price in the market. 
\begin{definition}[Exchange market equilibrium]
    An allocation-price pair $(\bbx^*, \bbp^*)$ is an exchange market equilibrium if the following conditions hold:
    \begin{enumerate}
        \item each agent $i$ receives the optimal bundle given the price, $\bbx^*_i \in \demand_i(\bbp^*, e^*_i)$, where $e^*_i = \sum_{j \in \mathcal{G}_i} p^*_j $;
        \item no good is oversold, $\sum_{i} x^*_{ij} \leq 1$ for all $j$;
        \item every good with a positive price is fully allocated, $\sum_{i} x^*_{ij} < 1$ implies $p^*_j = 0$.
    \end{enumerate}
\end{definition}
Since utility functions are strictly increasing, the following observation still holds.
\begin{observation} \label{obs:pos-price-exchange}
    At equilibrium, $\bbp^* > 0$ and for each good $j$, $\sum_i x^*_{ij} = 1$. 
\end{observation}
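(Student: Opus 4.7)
The plan is to adapt the strict-monotonicity argument that justified the analogous Observation~\ref{obs:pos-price} in Fisher markets; it carries over essentially without change.

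First, I would prove $\bbp^* > 0$ by contradiction. Suppose $p_j^* = 0$ for some good $j$. Fix any agent $i$ and any bundle $\bbx_i$ in the budget set $\{\bbx \geq 0 : (\bbp^*)^\top \bbx \leq e_i^*\}$, and let $\bbx_i^{(t)}$ be the bundle obtained from $\bbx_i$ by increasing the $j$-th coordinate by $t > 0$. Because the extra cost is $t \cdot p_j^* = 0$, $\bbx_i^{(t)}$ still lies in the budget set, and by strict monotonicity of $u_i$, $u_i(\bbx_i^{(t)}) > u_i(\bbx_i)$. Hence no finite bundle can maximize $u_i$ over the budget set, so $\demand_i(\bbp^*, e_i^*) = \emptyset$, contradicting the equilibrium requirement $\bbx_i^* \in \demand_i(\bbp^*, e_i^*)$.

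Once $\bbp^* > 0$, the contrapositive of equilibrium condition (3) ($\sum_i x_{ij}^* < 1 \Rightarrow p_j^* = 0$) forces $\sum_i x_{ij}^* \geq 1$ for every good $j$; combining with condition (2) ($\sum_i x_{ij}^* \leq 1$) yields $\sum_i x_{ij}^* = 1$, as desired.

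I do not expect any substantive obstacle. The only wrinkle relative to the Fisher setting is that budgets are now endogenous, $e_i^* = \sum_{j \in \mathcal{G}_i} p_j^*$, but this dependence does not interact with the per-agent monotonicity argument: regardless of the value taken by $e_i^*$, the existence of the strictly improving direction at $p_j^* = 0$ rules out any maximizer, and the rest of the reasoning is purely a matter of combining the equilibrium conditions.
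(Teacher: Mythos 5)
Your argument is correct and matches the paper's intent: the paper states this observation without a written proof, attributing it precisely to strict monotonicity of the utilities, and your two steps (no zero price can admit a utility maximizer, then conditions (2) and (3) force $\sum_i x^*_{ij}=1$) are exactly the standard reasoning being invoked. The remark that the endogenous budget $e_i^* = \sum_{j\in\mathcal{G}_i} p_j^*$ plays no role is also right.
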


\subsection{Lazy Proportional Response Dynamics} 
Lazy Proportional Response Dynamics is an iterative procedure introduced by \citet{BranzeiDevanurRabani2021} in exchange markets. Each agent has a parameter $0 < \alpha_i < 1$, which specifies the proportion of her current budget agent $i$ wants to spend in the market. At each iteration $t$, the dynamics are characterized by the following steps: 
\begin{enumerate}
    \item Budget Spending: Each agent $i$ allocates a fraction, $\alpha_i$, of the total budget $B_i^t$ among the goods. Let $b^t_{ij}$ denotes the spending of agent $i$ on good $j$ at round $t$. It follows that $\sum_j b_{ij}^t = \alpha_i B_i^t$ for each agent $i$.
    \item Goods Allocation: Based on the spending $\{b^t_{ij}\}_{\{i, j\}}$, the amount of good $j$ allocated to agent $i$ is proportional to her spending: $x_{ij}^t = b^t_{ij} / (\sum_i b_{ij}^t)$.
    \item Budget update: For the next round, each agent $i$ collects all agents' spending on goods in $i$ and adds it to her budget: the budget of agent $i$, $B_i^{t+1}$, is updated to be $(1 - \alpha_j) B_i^t + \sum_{i'} \sum_{j \in \mathcal{G}_i} b_{i'j}^t$.
    \item Spending Update:  the spending $b_{ij}^{t+1}$ is updated to be proportional to $x_{ij}^t \nabla_j u_i(\bbx_i^t)$.
\end{enumerate}

The process begins with an initial budget $B_i^0 > 0$ and initial spending $b_{ij}^0 > 0$ for all agents $i$ and goods $j$.

Mathematically, letting $e_i^t \triangleq \alpha_i B_i^t$, which is the total spending of agent $i$ at round $t$, the entire procedure can be expressed as:
\begin{align*}
    &b^{t+1}_{ij} =  e_i^{t+1} \frac{x_{ij}^t \nabla_j u_i(\bbx_i^t)}{\sum_{j'} x_{ij'}^t \nabla_{j'} u_i(\bbx_i^t)}~~~\text{for all $i$ and $j$}\\
    &~~~~~~\hspace{0.4in}\text{where $x_{ij}^t = \frac{b_{ij}^t}{p_j^t}$, $p_j^t = \sum_i b_{ij}^t$, and $e_i^{t+1} = (1 - \alpha_i) e_i^t + \alpha_i \sum_{j \in \mathcal{G}_i} p_j^t$.} \numberthis \label{eq:pr-exchange}
\end{align*}

WLOG, we make the following assumption.
\begin{assumption}
    The initial total budgets equals $1$: $\sum_i B_i^0 = 1$.
\end{assumption}
Since the money will not be consumed, we have the following straightforward claim.
\begin{claim}
    At any round $t$, $\sum_i B_i^t = 1$.
\end{claim}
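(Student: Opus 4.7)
The proof plan is a straightforward induction on $t$. The base case $\sum_i B_i^0 = 1$ is given by the stated assumption, so the task reduces to showing that if $\sum_i B_i^t = 1$ then $\sum_i B_i^{t+1} = 1$. For the inductive step, I would simply substitute the budget update rule from the definition of Lazy PR and rearrange.

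Writing out the sum, one has $\sum_i B_i^{t+1} = \sum_i (1-\alpha_i) B_i^t + \sum_i \sum_{i'} \sum_{j \in \mathcal{G}_i} b_{i'j}^t$. The key observation is that the endowment sets $\mathcal{G}_i$ partition the $m$ goods, since each good is initially endowed to exactly one agent. Swapping the order of the double sum and using this partition property yields $\sum_i \sum_{j \in \mathcal{G}_i} b_{i'j}^t = \sum_j b_{i'j}^t$, which by the Budget Spending step of the dynamics equals $\alpha_{i'} B_{i'}^t$ (the total amount agent $i'$ actually spends in round $t$). Substituting back gives $\sum_i B_i^{t+1} = \sum_i (1-\alpha_i) B_i^t + \sum_{i'} \alpha_{i'} B_{i'}^t = \sum_i B_i^t$, which by the inductive hypothesis is $1$.

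There is no real obstacle here; the argument is essentially a money-conservation check. Each agent $i$ retains a fraction $(1-\alpha_i)$ of her budget and spends the remaining $\alpha_i B_i^t$, and every dollar spent is collected in full by the unique owner of the good it was spent on. The partition structure of the endowments is what makes the two sides of the accounting match exactly, so total money is preserved across rounds.
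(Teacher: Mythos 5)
Your induction via money conservation is correct and is exactly the reasoning the paper has in mind (the paper simply asserts the claim as "straightforward" because "money will not be consumed," without writing out the bookkeeping). Your explicit accounting — each agent retains $(1-\alpha_i)B_i^t$, spends $\alpha_i B_i^t$, and every dollar spent is collected by the unique owner of the good, using that the $\mathcal{G}_i$ partition the goods — is the right formalization.
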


\subsection{Convergence Result}
The following theorem shows the sequence of allocations will converge to the exchange market equilibrium.
\begin{theorem}
    If a market equilibrium exists, then the sequence of allocations $\bbx^t$ converges to an equilibrium allocation $\bbx^*$.
\end{theorem}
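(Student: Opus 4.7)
The plan is to adapt the Fisher-market proof of Theorem~\ref{thm:main-2} by introducing a composite potential that simultaneously tracks how far the current spending and the current budgets are from equilibrium. Fix an equilibrium $(\bbx^*, \bbp^*)$, normalized so that the invariant $\sum_i B^t_i = 1$ is compatible with the fixed-point identity $B^*_i = p^*_{\mathcal{G}_i}/\alpha_i$, and set $b^*_{ij} = x^*_{ij} p^*_j$ and $e^*_i = p^*_{\mathcal{G}_i}$. I would define
\[
\Phi^t ~=~ \sum_{ij} b^*_{ij} \log \frac{b^*_{ij}}{b^t_{ij}} ~+~ \sum_{i} \frac{1-\alpha_i}{\alpha_i}\, e^*_i \log \frac{e^*_i}{e^t_i},
\]
where the first summand is the spending-KL that drove the Fisher-market proof, and the second is a budget-KL correction whose specific weight $(1-\alpha_i)/\alpha_i$ is exactly what is needed to absorb the drift introduced by the lazy budget update.

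To prove $\Phi^{t+1} \leq \Phi^t$, I would first expand the spending-KL as in the proof of Theorem~\ref{thm:main-1}. The update rule \eqref{eq:pr-exchange} gives $\log(b^t_{ij}/b^{t+1}_{ij}) = \log(p^t_j/q_{ij}(\bbx^t_i, e^t_i)) + \log(e^t_i/e^{t+1}_i)$, where $q_{ij}(\bbx^t_i, e^t_i)$ is the corresponding price computed at the current budget $e^t_i$. Lemma~\ref{lem:main-tech} is stated at a fixed budget $e_i$, but I would recover it by scale-invariance of demand: the rescaled personal prices $\tilde{\bbq}_i^t := (e^*_i/e^t_i)\,\bbq_i(\bbx^t_i,e^t_i)$ satisfy $\demandnb_{ij}(\tilde{\bbq}_i^t, e^*_i) = x^t_{ij}$, so the feasibility hypothesis $\sum_i \demandnb_{ij}(\tilde{\bbq}_i^t, e^*_i)=1$ holds and Lemma~\ref{lem:main-tech} yields $\sum_{ij} b^*_{ij}\log(p^*_j / q_{ij}(\bbx^t_i,e^t_i)) \leq \sum_i e^*_i \log(e^*_i/e^t_i)$. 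For the budget-KL change, concavity of $\log$ applied to $e^{t+1}_i = (1-\alpha_i) e^t_i + \alpha_i p^t_{\mathcal{G}_i}$ gives $\log(e^*_i/e^{t+1}_i) \leq (1-\alpha_i) \log(e^*_i/e^t_i) + \alpha_i \log(p^*_{\mathcal{G}_i}/p^t_{\mathcal{G}_i})$. Plugging these into $\Phi^{t+1}-\Phi^t$, the weight $(1-\alpha_i)/\alpha_i$ is precisely the value that makes all the $\log(e^*_i/e^t_i)$ contributions cancel, leaving
\[
\Phi^{t+1} - \Phi^t ~\leq~ -\sum_j p^*_j \log \frac{p^*_j}{p^t_j} ~+~ \sum_i e^*_i \log \frac{p^*_{\mathcal{G}_i}}{p^t_{\mathcal{G}_i}} ~\leq~ 0,
\]
where the second inequality is the log-sum inequality applied over each $\mathcal{G}_i$ (using $e^*_i = p^*_{\mathcal{G}_i}$).

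Since each $b^t_{ij}$ and each $e^t_i$ is uniformly bounded above by the invariant $\sum_i B^t_i = 1$, $\Phi^t$ is bounded below, so it converges and $\Phi^{t+1}-\Phi^t \to 0$. This forces the Lemma~\ref{lem:main-tech} slack, the concavity slack, and the log-sum slack all to vanish in the limit. From here I would mimic the proof of Theorem~\ref{thm:main-2}: suppose for a contradiction that $\bbx^t$ has a subsequential limit $\tilde{\bbx}\neq \bbx^*$, and produce a strictly positive lower bound on $-(\Phi^{t+1}-\Phi^t)$ along that subsequence.

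The hard part will be, as in Section~\ref{sec:proof-convergence}, handling coordinates $j$ where $x^*_{ij}=0$: the personal prices $\bbq_i(\bbx^t_i,e^t_i)$ (and their rescalings $\tilde{\bbq}_i^t$) can blow up along such subsequences. I would resolve this by exactly the device used there: truncate $\tilde{\bbq}_i^t$ into the bounded box $[\epsilon\bbp^*, \epsilon^{-1}\bbp^*]$, invoke Corollary~\ref{cor:main-technique} to argue that the truncation only tightens the Lemma~\ref{lem:main-tech} slack, extract a further convergent subsequence of truncated prices, and use Lemmas~\ref{lem:tech-2}, \ref{lem:tech-3}, and \ref{lem:converge-allocation-1} to rule out any limit whose demand at the equilibrium budget $e^*_i$ differs from $\bbx^*_i$. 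An extra ingredient not present in the Fisher case, caused by the time-varying budgets, is a uniform lower bound on $e^t_i$ needed to keep the rescaling factor $e^*_i/e^t_i$ bounded; this should follow from analyzing the lazy budget recursion together with the positivity of the dynamic prices $p^t_j$.
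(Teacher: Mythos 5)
Your proposal matches the paper's proof essentially step for step: the same composite potential $\sum_{ij} b^*_{ij}\log(b^*_{ij}/b^t_{ij}) + \sum_i \frac{1-\alpha_i}{\alpha_i}e^*_i\log(e^*_i/e^t_i)$, the same application of Lemma~\ref{lem:main-tech} to the $e^*_i$-scaled corresponding prices, the same concavity/log-sum step for the lazy budget recursion $e^{t+1}_i=(1-\alpha_i)e^t_i+\alpha_i p^t_{\mathcal{G}_i}$, and the same reduction of the final convergence step to the argument of Section~\ref{sec:proof-convergence}. Your added remark about needing a uniform lower bound on $e^t_i$ is a legitimate point the paper glosses over, and it does follow from the boundedness of the potential; otherwise there is nothing to change.
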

\begin{proof}
    Consider a market equilibrium $(\tilde{\bbx}^*, \tilde{\bbp}^*)$. We define other parameters as follows: $\tilde{B}_j^* = \frac{1}{\alpha_j} \tilde{p}^*_j$, $\tilde{e}_i^* = \tilde{p}^*_i$, and $\tilde{b}^*_{ij} = \tilde{x}^*_{ij} \tilde{p}^*_j$. To normalize the total money to $1$, we apply the following transformations: $\bbp^* = \tilde{\bbp}^* / (\sum_i \tilde{B}_i^*)$, $B_i^* = \tilde{B}_i^* / (\sum_i \tilde{B}_i^*)$, $e_i^* = \tilde{e}^*_i / (\sum_i \tilde{B}_i^*)$, and $\bbx^* = \tilde{\bbx}^*$. Thus,
    \begin{align*}
        &\sum_{ij} b_{ij}^* \log \frac{b_{ij}^*}{b_{ij}^{t+1}} +  \sum_i \left( \frac{1 - \alpha_i}{\alpha_i}\right) e_i^* \log \frac{e_i^*}{e_i^{t+1}}  \\
        &~~~~~~~~~\hspace{0.2in}=  \sum_{ij} \frac{1}{\alpha_i} b_{ij}^* \log \frac{e_i^*}{e_i^{t+1}} + \sum_{ij} b_{ij}^* \log \frac{b_{ij}^*}{b_{ij}^t} - \sum_{ij} b_{ij}^* \log \frac{p_j^*}{p_j^t} + \sum_{ij} b_{ij}^* \log \frac{p_j^*}{e_i^* \frac{ \nabla_j u_i(\bbx_i^t)}{\sum_{j'} x_{ij'}^t \nabla_{j'} u_i(\bbx_i^t)} }   \\
        &~~~~~~~~~\hspace{0.2in}=   \sum_{ij} b_{ij}^* \log \frac{b_{ij}^*}{b_{ij}^t}  +  \sum_i \left( \frac{1 - \alpha_i}{\alpha_i}\right) e_i^* \log \frac{e_i^*}{e_i^{t}} + \sum_{ij} b_{ij}^* \log \frac{p_j^*}{e_i^* \frac{ \nabla_j u_i(\bbx_i^t)}{\sum_{j'} x_{ij'}^t \nabla_{j'} u_i(\bbx_i^t)} } \\
        &~~~~~~~~~~~~~~~\hspace{0.2in}\hspace{0.2in}+  \sum_{ij} \frac{1}{\alpha_i} b_{ij}^* \log \frac{e_i^*}{e_i^{t+1}}  - \sum_i  \left( \frac{1 - \alpha_i}{\alpha_i}\right) e_i^* \log \frac{e_i^*}{e_i^{t}} - \sum_{ij} b_{ij}^* \log \frac{p_j^*}{p_j^t}.
    \end{align*}
    Note that 
    \begin{align*}
        \sum_{ij} b_{ij}^* \log \frac{b_{ij}^*}{b_{ij}^{t+1}} + \sum_i \left( \frac{1 - \alpha_i}{\alpha_i}\right) e_i^* \log \frac{e_i^*}{e_i^{t+1}} &= \sum_{ij} b_{ij}^* \log \frac{b_{ij}^*}{b_{ij}^{t+1}} + \sum_i \left( \frac{1 - \alpha_i}{\alpha_i}\right) e_i^* \log \frac{\left( \frac{1 - \alpha_i}{\alpha_i}\right) e_i^*}{\left( \frac{1 - \alpha_i}{\alpha_i}\right)e_i^{t+1}} \\
        &= \sum_{ij} b_{ij}^* \log \frac{b_{ij}^*}{b_{ij}^{t+1}} + \sum_i  (1 - \alpha_i) B_i^* \log \frac{ (1 - \alpha_i) B_i^*}{ (1 - \alpha_i) B_i^{t+1}}, \\
    \end{align*}
    which is a KL divergence  as $(1 - \alpha_i) B_i^{t+1}$ can be viewed as the money kept by agent $i$.
    
    Thus, letting $\mathbf{KL}(t) \triangleq \sum_{ij} b_{ij}^* \log \frac{b_{ij}^*}{b_{ij}^t}  +  \sum_i \left( \frac{1 - \alpha_i}{\alpha_i}\right) e_i^* \log \frac{e_i^*}{e_i^{t}} \geq 0$, we have
    \begin{align*}
        \mathbf{KL}(t + 1) &= \mathbf{KL}(t) + \sum_{ij} b_{ij}^* \log \frac{p_j^*}{e_i^* \frac{ \nabla_j u_i(\bbx_i^t)}{\sum_{j'} x_{ij'}^t \nabla_{j'} u_i(\bbx_i^t)} } \\
        &~~~~~~~~~~~~~\hspace{0.2in}\hspace{0.2in}+  \sum_{ij} \frac{1}{\alpha_i} b_{ij}^* \log \frac{e_i^*}{e_i^{t+1}}  - \sum_i  \left( \frac{1 - \alpha_i}{\alpha_i}\right) e_i^* \log \frac{e_i^*}{e_i^{t}} - \sum_{ij} b_{ij}^* \log \frac{p_j^*}{p_j^t} \\
        &= \mathbf{KL}(t) + \sum_{ij} b_{ij}^* \log \frac{p_j^*}{e_i^* \frac{ \nabla_j u_i(\bbx_i^t)}{\sum_{j'} x_{ij'}^t \nabla_{j'} u_i(\bbx_i^t)} } \\
        &~~~~~~~~~~~~~\hspace{0.2in}\hspace{0.2in}+  \sum_{i} \frac{1}{\alpha_i} e_i^* \log \frac{e_i^*}{e_i^{t+1}}  - \sum_i  \left( \frac{1 - \alpha_i}{\alpha_i}\right) e_i^* \log \frac{e_i^*}{e_i^{t}} - \sum_{j} p_{j}^* \log \frac{p_j^*}{p_j^t}. 
    \end{align*}

    Note that $\sum_{ij}^* b_{ij}^* \log \frac{p_j^*}{e_i^* \frac{ \nabla_j u_i(\bbx_i^t)}{\sum_{j'} x_{ij'}^t \nabla_{j'} u_i(\bbx_i^t)} } \leq 0$ by Lemma~\ref{lem:main-tech}, and $$\frac{1}{\alpha_i} e_i^* \log \frac{e_i^*}{e_i^{t+1}} \leq   \sum_i  \left( \frac{1 - \alpha_i}{\alpha_i}\right) e_i^* \log \frac{e_i^*}{e_i^{t}} + \sum_{j \in \mathcal{G}_i} p_j^* \log \frac{p_j^*}{p_j^t}$$ as $e_i^{t+1} = (1 - \alpha_i) e_i^t + \alpha_i \sum_{j \in \mathcal{G}_i} p_j^t$. This implies $\mathbf{KL}(t + 1)$ is non-increasing over $t$.

    Additionally, the result follows by applying the argument in Section~\ref{sec:proof-convergence}.
 \end{proof}

\section{Conclusion}
In this paper, we introduce a new PR update rule. We demonstrate that in Fisher markets, when all buyers have GS utility functions, the new PR rule converges point-wise to the competitive equilibrium. Moreover, we establish that the PR process guarantees an empirical convergence rate of $\calO(1/T)$ towards the equilibrium price. In Arrow-Debreu markets, we focus on a lazy version of the PR update and show that it also converges point-wise to the competitive equilibrium.

One interesting direction for future research is to explore the cases where linear convergence rate can be established for PR dynamics. Previous work \cite{Zhang2011,cheung2018dynamics} analyzed the PR process with CES utility functions, showing that the convergence rate is linear when the CES parameter $\rho$ is bounded away from 1, with $\rho=1$ corresponding to the linear utility case. The question remains whether this linear convergence result can be extended to broader classes of utility functions beyond CES utilities.

Convergence results for natural market dynamics have largely been established for subfamilies of GS utilities, with only a few exceptions~\cite{cheung2018dynamics,CCD2013,CCR2012}.
Our findings contribute to this line of research, moving toward the ultimate goal of delineating the boundaries of markets in which PR and t\^atonnement remain effective.

\section*{Acknowledgement}

The work is supported by the Australian Research Council (ARC) Discovery Project under Grant No.: DP240100506.

\bibliographystyle{plainnat}
\bibliography{bib}
\end{document}